\newtheorem{theorem}{Theorem}[section]
\newtheorem{prop}{Proposition}[section]
\newtheorem{remark}{Remark}[section]
\newcommand{\trace}{{\rm Tr}}
\newcommand{\rank}{{\rm Rank}}
\newcommand{\st}{{\rm s.t.}}
\newcommand{\bsum}{{\rm BSUM}}
\newcommand{\bI}{\mathbf{I}}
\newcommand{\bH}{\mathbf{H}}
\newcommand{\bQ}{\mathbf{Q}}
\newcommand{\bPi}{\bm{\Pi}}
\newcommand{\bzero}{\bm{0}}
\newcommand{\bA}{\mathbf{A}}
\newcommand{\tA}{\tilde{\bA}}
\newcommand{\bB}{\mathbf{B}}
\newcommand{\bC}{\mathbf{C}}
\newcommand{\bE}{\mathbf{E}}
\newcommand{\bU}{\mathbf{U}}
\newcommand{\bV}{\mathbf{V}}
\newcommand{\barW}{\bar{\bW}}
\newcommand{\barU}{\bar{\bU}}
\newcommand{\barQ}{\bar{\bQ}}
\newcommand{\barS}{\bar{\bS}}
\newcommand{\bX}{\mathbf{X}}
\newcommand{\cX}{\mathcal{X}}
\newcommand{\bY}{\mathbf{Y}}
\newcommand{\cT}{\mathcal{T}}
\newcommand{\tf}{\tilde{f}}
\newcommand{\cE}{\mathbb{E}}
\newcommand{\bW}{\mathbf{W}}
\newcommand{\bS}{\mathbf{S}}
\newcommand{\bR}{\mathbf{R}}
\newcommand{\bSigma}{\mathbf{\Sigma}}
\newcommand{\bx}{\bm{x}}
\newcommand{\tx}{\tilde{\bx}}
\newcommand{\by}{\bm{y}}
\newcommand{\bh}{\bm{h}}
\newcommand{\bu}{\bm{u}}
\newcommand{\bv}{\bm{v}}
\newcommand{\bn}{\bm{n}}
\newcommand{\br}{\bm{r}}
\newcommand{\bs}{\bm{s}}
\newcommand{\Cdom}{\mathbb{C}}
\newcommand{\Rdom}{\mathbb{R}}
\newcommand{\cgauss}{\mathcal{CN}}
\newcommand{\tS}{\tilde{\bS}}
\newcommand{\tQ}{\tilde{\bQ}}
\newcommand{\tV}{\tilde{\bV}}
\newcommand{\bmu}{\bm{\mu}}
\newcommand{\blue}[1]{\textcolor{black}{#1}}
\begin{document}

\title{Joint Source-Relay Design for Full--Duplex MIMO AF Relay Systems}

\author{
\authorblockN{Qingjiang Shi, Mingyi Hong, Xiqi Gao, Enbin Song, Yunlong Cai, Weiqiang Xu}
\thanks{Q. Shi is with the School of Info. Sci. \& Tech., Zhejiang Sci-Tech University, Hangzhou 310018, China.
Email: qing.j.shi@gmail.com}
\thanks {M. Hong is with the Dept. of Industrial and Manufacturing Systems Engineering, Iowa State University, IA 50011, USA. Email: mingyi@iastate.edu}
\thanks {X. Gao is with the National Mobile Communications
Research Laboratory, Southeast University, Nanjing 210096, China. Email: xqgao@seu.edu.cn}
\thanks {E. Song is with the College of Mathematics, Sichuan University, Chendu, Sichuan 610064, China. Email: e.b.song@163.com.}
\thanks {Y. Cai is with the Department of Information Science
and Electronic Engineering, Zhejiang University, Hangzhou 310027, China.
E-mail: ylcai@zju.edu.cn}
\thanks{W. Xu is with the School of Info. Sci. \& Tech. Zhejiang Sci-Tech University, Hangzhou 310018, China. Email: wq.xu@126.com}
}


\maketitle

\begin{abstract}
The performance of full-duplex (FD) relay systems can be greatly impacted by the self-interference (SI) at relays. By exploiting multi-antenna in FD relay systems, the spectral efficiency of FD relay systems can be enhanced through spatial SI mitigation. This paper studies joint source transmit beamforming and relay processing to achieve rate maximization for FD MIMO amplify-and-forward (AF) relay systems with consideration of relay processing delay. The problem is difficult to solve due mainly to the SI constraint induced by the relay processing delay. In this paper, we first present a sufficient condition under which the relay amplification matrix has rank one structure. Then, for the case of rank one amplification matrix, the rate maximization problem is equivalently simplified into an unconstrained problem which can be locally solved using gradient ascent method. Next, we propose a penalty-based algorithmic framework, called P-BSUM, for a class of constrained optimization problems which have difficult equality constraints in addition to some convex constraints. By rewriting the rate maximization problem with a set of auxiliary variables, we apply the P-BSUM algorithm to the rate maximization problem in the general case. Finally, numerical results validate the efficiency of the proposed algorithms and show that the joint source-relay design approach under the rank one assumption could be strictly suboptimal as compared to the P-BSUM-based joint source-relay design approach.
\end{abstract}

\begin{keywords}
Full-duplex relaying, MIMO, joint source-relay design,
penalty method, BSUM.
\end{keywords}

\IEEEpeerreviewmaketitle

\section{Introduction}
To simplify transceiver design and reduce implementation cost, traditional relay systems work in \emph{half-duplex} (HD) mode, where the source and relay transmit signal in two orthogonal and dedicated channels. This inherently results in a waste of channel resources and incurs loss of spectrum efficiency. As compared to the \blue{HD} relaying, \emph{full-duplex} (FD) relaying, where the relay node can simultaneously transmit and receive signals over the same frequency band, has potential to approximately double the system spectral efficiency. Hence, with the recent advance of self-interference cancellation technologies, FD relaying has received a great deal of attentions\cite{Kim2015,Riihonen2009,Bharadia2013,Bharadia2014,Riihonen2011}.

When the relay operates in the FD mode, the \emph{loopback interference}, also known as self-interference (SI), occurs due to signal loopback from
the relay's transmission side to its reception side. Since the SI at the relay is generally much stronger than the received signal from distant source (i.e., the large power differential issue),
it could exceed the dynamic range of the analog-to-digital converter at the reception side of the relay \cite{Riihonen2009,Day2012}, and make it almost impossible to retrieve the desired signal. Hence, to ensure successful implementation of full-duplex relaying, it is critical to sufficiently mitigate the SI at relays. So far, a variety of SI mitigation technologies were proposed, including mainly antenna, analog, digital, spatial cancellations \cite{Riihonen2011}. With these cancellation technologies, encouraging experimental results showed that the SI can be well mitigated (even can be suppressed to the noise level\cite{Bharadia2013,Bharadia2014}) to make the FD communication feasible.

Multi-antenna technology can not only greatly improve the spectral efficiency but also provide more degrees of freedom for suppressing the SI in the spatial domain\cite{Riihonen2011}. Hence, it is natural to combine the MIMO and FD relaying technologies to achieve higher spectral efficiency, leading to FD MIMO relaying. Recently, FD MIMO relaying has gained a lot of research interest, e.g., \cite{Riihonen2011,Antonio2014,Antonio2015,Omid2016,Lioliou2010,Shang2014,Ngo2014, Ugurlu2016,Kang2009,Zhang2013,Choi2012,Chun2012,Day2012,Suraweera2014,Zheng2015}.
The work \cite{Riihonen2011} focused on the mitigation of
self-interference (i.e., SI minimization) in spatial domain by equipping the relay with a receive filter and a transmit filter, and proposed antenna selection, beam selection, null-space projection, and MMSE filtering schemes for transmit/receive filter design. Among the above four schemes, null-space projection method can eliminate all loop interference in the ideal case with perfect side information.
The work \cite{Lioliou2010} studied relay design to achieve self-interference suppression by maximizing the ratio
between the power of the useful signal to the self-interference
power at the relay reception and transmission. Such a design can suppress interference substantially with less impact on the useful signal. The works \cite{Antonio2014,Antonio2015} proposed SINR-maximization-based SI mitigation method for wideband full-duplex regenerative MIMO relays.

\blue{While \cite{Riihonen2011,Lioliou2010,Antonio2014,Antonio2015}} focused on SI mitigation/suppression methods, the works \cite{Kang2009,Zhang2013,Omid2016, Shang2014, Ugurlu2016, Ngo2014, Choi2012,Chun2012,Day2012,Suraweera2014,Zheng2015} aimed at end-to-end performance optimization for FD MIMO relay systems. In \cite{Kang2009}, the authors treated the self-interference simply as noise and derived the channel capacity of FD MIMO relaying systems. Based on majorization theory, they proposed an optimal relay precoding scheme to achieve the channel capacity. \blue{Differently from \cite{Kang2009}, the works \cite{Zhang2013,Omid2016} assumed that the SI can be completely removed if its power is smaller than a threshold. Under this assumption, they developed convex optimization based joint source-relay precoding methods for achieving rate maximization in FD MIMO relaying systems under different antenna setups.} In \cite{Choi2012},
transmit and receive filters of the multi-antenna full duplex relay systems were designed to achieve near-optimal system throughput while removing the self-interference. In \cite{Chun2012}, a novel joint transmit and receive filters design scheme was proposed to eliminate the self-interference while optimizing the end-to-end achievable rate for both amplify-and-forward and decode-and-forward relay systems. In \cite{Day2012}, the authors derived tight upper and lower bounds on the end-to-end achievable rate of decode-and-forward-based full-duplex MIMO relay systems, and proposed a transmission scheme by maximizing the lower bound using gradient projection method. \blue{The work \cite{Shang2014} proposed several different precoder and weight vector designs using the principles of signal to leakage plus noise ratio, minimum mean square error, and zero forcing to improve the rate performance of an FD MIMO AF relay system, and derived a closed-form solution for the relay signaling covariance matrix. In \cite{Ngo2014}, the authors showed that the loop interference effect can be significantly reduced using massive relay antennas in an FD decode-and-forward relay system with multiple single-antenna source-destination pairs. In order to achieve the maximal end-to-end link performance with single-stream transmission, the work \cite{Ugurlu2016} investigated the optimization of FD in-band MIMO relay systems via spatial-domain suppression and power allocation.}

It is noted that the above works have assumed zero \emph{relay processing delay}. However, the relay processing delay is strictly positive in practice and neglecting it would cause severe causality issues in the practical implementation of relaying protocols (see \cite{Riihonen2011,Riihonen200911} for more discussion on the consequences of neglecting the relay processing delay). Hence, the relay processing delay should be taken into consideration in FD relay system design. In \cite{Suraweera2014}, the authors considered the relay processing delay in \emph{single-stream} \blue{FD} MIMO AF relay systems and proposed low-complexity
joint precoding/decoding schemes to optimize the end-to-end performance. In addition, the work \cite{Zheng2015} studied the end-to-end performance optimization for \emph{two-way} FD relay systems with processing delay, where all three nodes work in FD mode and only the relay is equipped with multiple antennas.

In this paper, as in \cite{Suraweera2014}, we consider a three-node FD MIMO AF relay system which consists of a \emph{multi-antenna source, a multi-antenna FD relay, and a multi-antenna destination}. We extend the work \cite{Suraweera2014} to the more general \emph{multi-stream} scenario and study joint source-relay design (i.e., jointly design the source transmit beamforming $\bV$ and relay amplification matrix $\bQ$) to optimize the end-to-end achievable rate with the consideration of the relay processing delay. As compared to the single-stream case in \cite{Suraweera2014}, the rate maximization problem in the multi-stream case is much more involved due mainly to the difficult zero-forcing SI constraint $\bQ\bH_{RR}\bQ=\bzero$, where $\bH_{RR}$ \blue{denotes the residual} self-interference channel between the relay output and the relay input. Thus it requires completely different solutions.

The main contributions of this paper are threefold:
\begin{itemize}
\item [1)] 
    \blue{It is proven that, when the residual SI channel $\bH_{RR}$ has full rank and the FD relay is equipped with no more than three transmit and receive antennas, the relay amplification matrix $\bQ$ must be of rank one, implying that \emph{single-stream transmission} can achieve the optimal system rate in this case under the zero-forcing SI condition}.

\item [2)] \blue{For the case when the relay amplification matrix has rank one structure,
we show that the rate maximization problem
    can be equivalently turned into an \emph{unconstrained} problem. The derived unconstrained problem is locally solved using gradient ascent method. In addition, we propose two low complexity suboptimal solutions to the rank one case, both of which are shown to be able to achieve \emph{asymptotic optimality} under the zero-forcing SI condition.}
\item [3)] \blue{For the general case (i.e., when $\bQ$ is not of rank one), to deal with the difficulty arising mainly from the zero-forcing SI constraint, we first develop a penalty-based iterative optimization approach with a rigorous convergence analysis. Then, we show that the proposed approach can address the rate maximization problem of general case, with better rate performance than the single-stream transmission case.}

\end{itemize}

The remainder of this paper is organized as follows. In Section II, the rate maximization problem is formulated and some property of the SI constraint is analyzed. We address the rate maximization problem in the rank one case and the general case in Section III and IV, respectively. Section V demonstrates some numerical results, while Section VI concludes the paper.

\emph{Notations}: scalars are denoted by lower-case letters, bold-face lower-case letters are used for vectors,
and bold-face upper-case letters for matrices.
For a scalar (resp., vector) function $f(x)$, $\nabla f(x)$
denotes its gradient (resp., Jacobian matrix) at $x$.
For a square matrix $\bA$, $\bA^T$, $\bA^H$, $\bA^\dag$, $\trace(\bA)$ and $\rank(\bA)$ denote
its transpose, conjugate transpose, pseudo-inverse, trace, and rank, respectively.
$\bI$ denotes an identity matrix whose dimension will be clear from the context. $|x|$ is the absolute value of a complex scalar $x$, while $\Vert\bx\Vert$ and $\Vert\bX\Vert$ denote the Euclidean norm and the Frobenius norm of a complex vector $\bx$ and a complex matrix $\bX$, respectively. $\Vert\bx\Vert_{\infty}$ denotes the infinity norm. For a $m$ by $n$ complex matrix $\bX$, $\angle(\bX)$ returns a $m$ by $n$  matrix of phase angles of entries of $\bX$.  The distribution of a circularly symmetric complex Gaussian (CSCG) random vector variable with
mean $\bm{\mu}$ and covariance matrix $\bC$ is denoted by $\cgauss(\bm{\mu},\bC)$, and `$\sim$' stands for `
distributed as'. $\Cdom^{m\times n}$ denotes the space of $m\times n$ complex matrices and $\Rdom^n$ denotes the n-dimensional real vector space. A projection of some point
$\bX$ onto a set $\Omega$ is denoted by $\mathcal{P}_{\Omega}\{\bX\}\triangleq \min_{\bY\in \Omega} \Vert\bX-\bY\Vert$. If $\Omega$ is a ball of radius $r$ centered at the origin, i.e., $\Omega=\{\bX~|~\Vert\bX\Vert\leq r\}$, then $\mathcal{P}_{\Omega}\{\bX\}$ is equal to $r\frac{\bX}{\Vert\bX\Vert+\max(0, r-\Vert\bX\Vert)}$. 

\section{System Model and Problem formulation}
\begin{figure}[t]
\centering
\includegraphics[width=3.in]{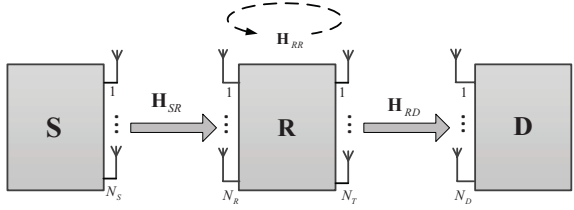}
\caption{A full-duplex MIMO relay network.}
\label{fig:fig1}
\end{figure}
As depicted in Fig. \ref{fig:fig1}, we consider a three-node full duplex MIMO relay network where the source S sends information to the destination D with the aid of a full-duplex relay R. In the network, the source and destination are equipped with $N_S>1$ and $N_D>1$ antennas, while the relay is equipped with $N_T>1$ transmit antennas and $N_R>1$ receive antennas to enable full-duplex operation. Let $\bH_{SR}\in\Cdom^{N_R\times N_S}$ denote the channel between the source and relay, and $\bH_{RD}\in\Cdom^{N_D\times N_T}$ denote the channel between the relay and destination. In addition, let $\bH_{RR}\in \Cdom^{N_R\times N_T}$ represent the \emph{residual self-interference} channel after imperfect SI cancellation. We assume that all the channels are subject to independent block-fading, i.e., they stay constant during one fading block but change independently at the beginning of the next fading block.

The processing time is required at the relay to implement the FD operation. This results in processing delay at the relay, which we assume is given by a $\tau$-symbol duration. Typically, the delay is much shorter than a time slot which consists of a large number of data symbols. Therefore, its effect on the achievable rate is negligible\cite{Zheng2015}. Additionally, suppose that linear processing is employed at the source and the relay to enhance the system performance. The source uses beamforming matrix $\bV\in \Cdom^{N_S\times d}$ to send its signal while the relay uses the amplification matrix $\bQ\in \Cdom^{N_T\times N_R}$ (i.e., AF relay protocol) to process its received signal. Hence, at the time instant $n$, the received signal $\br[n]\in \Cdom^{N_R\times 1}$ at the relay is
\begin{equation}\label{eq:relay-Rsignal}
\br[n] = \bH_{SR}\bV\bs[n]+\bH_{RR}\bx_R[n]+\bn_R[n]
\end{equation}
where $\bs[n]\sim\cgauss(\bzero,\bI_d)$ is a vector of $d$ transmit symbols, $\bn_R[n]\sim\cgauss(0,\sigma_R^2\bI)$ denotes the complex additive white Gaussian noise (AWGN), and the term $\bH_{RR}\bx_R[n]$ represents the residual SI from the relay output to relay input.
And the transmit signal $\bx_R[n]$ at the relay is
\begin{equation}\label{eq:relay-Tsignal}
\bx_R[n] = \bQ\br[n-\tau]
\end{equation}
Combining \eqref{eq:relay-Rsignal} with \eqref{eq:relay-Tsignal}, the relay output can be rewritten as
\begin{equation}\label{eq:relay_out}
\begin{split}
\bx_R[n] &= \bQ\bH_{SR}\bV\bs[n-\tau]+\bQ\bH_{RR}\bx_R[n-\tau]\\
&~~~~~~~~~~~~~~~+\bQ\bn_R[n-\tau]\\
&=\bQ\bH_{SR}\bV\bs[n-\tau]+\bQ\bH_{RR}\bQ\br[n-2\tau]\\
&~~~~~~~~~~~~~~+\bQ\bn_R[n-\tau]
\end{split}
\end{equation}

The term $\bQ\bH_{RR}\bQ\br[n-2\tau]$ in \eqref{eq:relay_out} is a complicated function of $\bQ$ and makes the system design very difficult. To simplify design, \blue{as in \cite{Zheng2015,Suraweera2014}}, we impose a zero-forcing condition on $\bQ$ to null out the residual SI from the relay output to relay input, i.e.,
\begin{equation}\label{eq:ZF-SI}
\bQ\bH_{RR}\bQ=0
\end{equation}
which is referred to as (zero-forcing) SI constraint.
Plugging \eqref{eq:ZF-SI} into \eqref{eq:relay_out}, we obtain
\begin{equation}
\bx_R[n]=\bQ\bH_{SR}\bV\bs[n-\tau]+\bQ\bn_R[n-\tau].
\end{equation}
Consequently, the received signal at the destination is
\begin{equation}\label{eq:des_input}
\begin{split}
\by_D[n]&=\bH_{RD}\bx_R[n]+\bn_D[n]\\
&=\bH_{RD}\left(\bQ\bH_{SR}\bV\bs[n{-}\tau]{+}\bQ\bn_R[n{-}\tau]\right){+}\bn_D[n]
\end{split}
\end{equation}
where $\bn_D[n]\sim\cgauss(0,\sigma_D^2\bI)$ denotes the complex AWGN.

According to \eqref{eq:des_input}, the system rate can be expressed as
\begin{align}
R(\bV, \bQ) &{=} \log\det\Bigg(\bI+\bH_{RD}\bQ\bH_{SR}\bV\bV^H\bH_{SR}^H\bQ^H\bH_{RD}^H\times\nonumber\\
&~~~~~~~~~~\bigg(\sigma_R^2\bH_{RD}\bQ\bQ^H\bH_{RD}^H+\sigma_D^2\bI\bigg)^{-1}\Bigg).
\end{align}
Moreover, the power consumption at the relay is given by
\begin{equation}
p_R(\bV, \bQ)=\trace\left(\bQ\bH_{SR}\bV\bV^H\bH_{SR}^H\bQ^H\right)+\sigma_R^2\trace\left(\bQ\bQ^H\right)
\end{equation}
and the power consumption at the source is $\trace(\bV\bV^H)$.

In this paper, we are interested in joint source-relay design to optimize the system rate subject to source/relay power constraints and the SI constraint. Mathematically, the rate maximization problem is formulated as
\begin{equation}\label{eq:rate_prob}
\begin{split}
&\max_{\bV, \bQ} R(\bV, \bQ)\\
&\st~p_R(\bV, \bQ)\leq P_R,\\
&~~~~~\bQ\bH_{RR}\bQ=0,\\
&~~~~~\trace(\bV\bV^H)\leq P_S.
\end{split}
\end{equation}
where $P_S$ and $P_R$ are the allowed maximum transmission power at the \blue{source} and relay, respectively. Problem \eqref{eq:rate_prob} is nonconvex and complicated mainly by the SI constraint. Even if removing the SI constraint, the problem is still difficult due to the coupling of the optimization variables in the relay power constraint. In this paper, we aim to provide systematic methods to tackle the difficulties arising from both the SI constraint and the coupling of variables.

\blue{A simple way to deal with the difficult SI constraint is by assuming $\rank(\bQ)=1$ \cite{Zheng2015,Suraweera2014}. With this assumption, the SI constraint can be simplified and problem \eqref{eq:rate_prob} becomes more tractable. Thus, an interesting question is: \emph{under what circumstance will the solution $\bQ$ to problem \eqref{eq:rate_prob} be of rank one?} The following proposition partly answers this question and presents a sufficient condition under which $\rank(\bQ)=1$.}
\blue{\begin{prop}\label{prop:prop2.1}
Suppose that the numbers of relay antennas $N_T$ and $N_R$ satisfy $N_T,N_R\in \{2,3\}$ and the residual SI channel $\bH_{RR}$ has full rank\footnote{In fact, a matrix has full rank with probability one if its elements  are randomly drawn from an absolutely continuous distribution\cite[pp. 364]{Rodrigo_book}.}. We have $\rank(\bQ)=1$ if $\bQ\bH_{RR}\bQ=\bzero$.
\end{prop}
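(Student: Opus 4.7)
The plan is to convert the matrix identity $\bQ\bH_{RR}\bQ=\bzero$ into a linear-algebraic dimension inequality on $\rank(\bQ)$, and then read off $\rank(\bQ)\leq 1$ from the stated bounds on $N_T$ and $N_R$.

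First I would restate the identity geometrically. Writing $V=\mathrm{range}(\bQ)\subseteq\Cdom^{N_T}$ (of dimension $r\triangleq\rank(\bQ)$) and $\mK=\ker(\bQ)\subseteq\Cdom^{N_R}$ (of dimension $N_R-r$), the equation $\bQ\bH_{RR}\bQ\bu=\bzero$ for every $\bu\in\Cdom^{N_R}$ is equivalent to saying that $\bH_{RR}$ maps every vector of $V$ into $\mK$, i.e., $\bH_{RR}(V)\subseteq\mK$.

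Second, I would apply rank--nullity to the restricted map $\bH_{RR}|_V$. Full-rankness of $\bH_{RR}\in\Cdom^{N_R\times N_T}$ gives $\dim\ker(\bH_{RR})=\max\{0,N_T-N_R\}$, so
\begin{equation*}
\dim\bH_{RR}(V)=r-\dim\bigl(V\cap\ker(\bH_{RR})\bigr)\geq r-\max\{0,N_T-N_R\}.
\end{equation*}
Combining this with the containment $\bH_{RR}(V)\subseteq\mK$ (which gives $\dim\bH_{RR}(V)\leq N_R-r$) yields the key inequality $2r\leq \max\{N_R,N_T\}$.

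Finally, I would substitute $N_T,N_R\in\{2,3\}$ to conclude $2r\leq 3$, hence $r\leq 1$; the asserted equality $\rank(\bQ)=1$ then follows by tacitly excluding the degenerate case $\bQ=\bzero$, in which the relay would forward nothing. There is no deep obstacle to this argument; the only point requiring a moment of care is the non-square setting $(N_T,N_R)\in\{(2,3),(3,2)\}$, where $\bH_{RR}$ has a one-dimensional kernel that could, in principle, absorb part of $V$. The crude bound $\dim(V\cap\ker\bH_{RR})\leq\dim\ker\bH_{RR}$ is already tight enough to finish the argument.
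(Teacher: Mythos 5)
Your argument is correct and is essentially the paper's proof: the paper obtains the same bound by citing Sylvester's rank inequality, $\rank(\bQ\bH_{RR}\bQ)\geq \rank(\bH_{RR}\bQ)+\rank(\bQ)-N_R=2\rank(\bQ)-N_R$ (using full column rank of $\bH_{RR}$ when $N_T\leq N_R$, and symmetrically otherwise), whereas your kernel/image dimension count is precisely the standard derivation of that inequality, with the two antenna orderings handled in one stroke via $2\,\rank(\bQ)\leq\max\{N_T,N_R\}$. Your tacit exclusion of $\bQ=\bzero$ matches the paper's ``it is trivial to see $\rank(\bQ)>0$.''
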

\begin{proof}\label{prop:prop2.1}
Please see Appendix \ref{appendix_A}.
\end{proof}}
\blue{
Generally speaking, spatial multiplexing can improve the spectral efficiency of various MIMO systems. However, Proposition 2.1 shows a key result for FD MIMO relay system design, that is, when  the residual SI channel $\bH_{RR}$ has full rank and the FD relay is equipped with no more than three transmit and receive antennas, \emph{single-stream transmission} can achieve the optimal system rate under the zero-forcing SI constraint. This further motivates us to consider the rank one case in the following section.
}
\section{Rank-1 Structured suboptimal solution}
In this section, we assume that $\bQ$ is structured as $\bQ = \bx_t\bx_r^H$ (which is of rank one) and propose efficient solutions to problem \eqref{eq:rate_prob}.

We start by simplifying problem \eqref{eq:rate_prob} based on the rank one structure. When $\bQ=\bx_t\bx_r^H$, we have
\begin{align}
&\bQ\bH_{SR}\bV\bV^H\bH_{SR}^H\bQ^H=\Vert\bx_r^H\bH_{SR}\bV\Vert^2\bx_t\bx_t^H,\nonumber\\
&\bQ\bQ^H = \Vert\bx_r\Vert^2\bx_t\bx_t^H.\nonumber
\end{align}
Using the above two relations, $R(\bV, \bQ)$ reduces to
\begin{align}
&R(\bV, \bQ) =\log\Bigg(1+\Vert\bx_r^H\bH_{SR}\bV\Vert^2\bx_t^H\bH_{RD}^H\\\nonumber
&\times\bigg(\sigma_R^2\Vert\bx_r\Vert^2\bH_{RD}\bx_t\bx_t^H\bH_{RD}^H+\sigma_D^2\bI\bigg)^{-1}\bH_{RD}\bx_t\Bigg)\nonumber\\
&=\log\Bigg(1+\frac{\Vert\bx_r^H\bH_{SR}\bV\Vert^2\Vert\bH_{RD}\bx_t\Vert^2}{\sigma_R^2\Vert\bx_r\Vert^2\Vert\bH_{RD}\bx_t\Vert^2+\sigma_D^2}\Bigg)
\end{align}
where the second equality follows from the identity $(\bI+\bA\bB)^{-1}\bA=\bA(\bI+\bB\bA)^{-1}$\cite[Sec. 3.2.4]{Mtx_book}.
Similarly, using the identity $\trace(\bA\bB)=\trace(\bB\bA)$, $p_R(\bV, \bQ)$ reduces to
\begin{align}
p_R(\bV, \bQ)=\Vert\bx_r^H\bH_{SR}\bV\Vert^2\Vert\bx_t\Vert^2+\sigma_R^2\Vert\bx_r\Vert^2\Vert\bx_t\Vert^2.
\end{align}
Furthermore, $\bQ\bH_{rr}\bQ=0$ implies $\bx_r^H\bH_{rr}\bx_t=0$. Hence, together with the monotonicity of the $log$ function, problem \eqref{eq:rate_prob} can be equivalently written as follows
\begin{equation}\label{eq:rate_prob_equiv1}
\begin{split}
&\max_{\bV, \bx_t, \bx_r} \frac{\Vert\bx_r^H\bH_{SR}\bV\Vert^2\Vert\bH_{RD}\bx_t\Vert^2}{\sigma_R^2\Vert\bx_r\Vert^2\Vert\bH_{RD}\bx_t\Vert^2+\sigma_D^2}\\
&\st~\Vert\bx_r^H\bH_{SR}\bV\Vert^2\Vert\bx_t\Vert^2+\sigma_R^2\Vert\bx_r\Vert^2\Vert\bx_t\Vert^2\leq P_R,\\
&~~~~~\bx_r^H\bH_{RR}\bx_t=0,\\
&~~~~~\trace(\bV\bV^H)\leq P_S.
\end{split}
\end{equation}


Although problem \eqref{eq:rate_prob_equiv1} has a simpler form than \eqref{eq:rate_prob}, it is still very difficult to solve due mainly to the coupled SI constraint and relay power constraint. Thanks to the special problem structure, we can overcome these two difficulties and simplify it as an \emph{unconstrained} problem with respect to $\bx_r$ only, which is stated in the following proposition.
\begin{prop}\label{prop:prop11}
Define a projection operator $\bPi\triangleq\bI-\frac{\bH_{RR}^H\bx_r\bx_r^H\bH_{RR}}{\Vert\bH_{RR}^H\bx_r\Vert^2}$ and denote by $\lambda_{\max}(\bx_r)$ the largest eigenvalue of the matrix $\bH_{RD}\bPi\bH_{RD}^H$.
\begin{itemize}
\item[1)] Problem \eqref{eq:rate_prob_equiv1} can be recast as the following unconstrained problem
\begin{equation}\label{eq:rate_prob_uncon}
\begin{split}
&\max_{\bx_r} \frac{P_S\Vert\bx_r^H\bH_{SR}\Vert^2\lambda_{\max}(\bx_r)}{\sigma_R^2\Vert\bx_r\Vert^2\lambda_{\max}(\bx_r){+}\frac{\sigma_D^2}{P_R}(P_S\Vert\bx_r^H\bH_{SR}\Vert^2{+}\sigma_R^2\Vert\bx_r\Vert^2)}\\
\end{split}
\end{equation}
\item[2)] Given an optimal solution $\bx_r$ to problem \eqref{eq:rate_prob_uncon}, the triple $(\bV^*, \bx_t^*, \bx_r^*)$ given below is an optimal solution to problem \eqref{eq:rate_prob_equiv1}.
\begin{align}
&\bV^*=\sqrt{P_S}\frac{\bH_{SR}^H\bx_r}{\Vert\bH_{SR}^H\bx_r\Vert}\label{eq:sol_v}\\
&\bx_t^* \in \arg\max_{\Vert\bu\Vert=1}\bu^H\bPi\bH_{RD}^H\bH_{RD}\bPi\bu\label{eq:sol_xt}\\
&\bx_r^* = \sqrt{\frac{P_R}{P_S\Vert\bx_r^H\bH_{SR}\Vert^2+\sigma_R^2\Vert\bx_r\Vert^2}} \bx_r\label{eq:sol_xr}
\end{align}
\end{itemize}
\end{prop}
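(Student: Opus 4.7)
The plan is to prove both parts at once by eliminating, for each fixed $\bx_r$, the variables $\bV$ and $\bx_t$ in closed form and then recognizing the residual problem as \eqref{eq:rate_prob_uncon}. The workhorse is the observation that any ratio of the form $\phi(s) = as/(bs + c)$ with $a, b, c > 0$ is strictly increasing in $s \geq 0$; consequently, whenever a variable appears in a single monotone way in the objective, it can be pushed to the boundary of its feasible range.

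First I would substitute $\bx_t = t\bu$ with $\Vert\bu\Vert = 1$. The SI constraint decouples into $\bx_r^H\bH_{RR}\bu = 0$, independent of $t$, and the relay power constraint becomes $t^2(\Vert\bx_r^H\bH_{SR}\bV\Vert^2 + \sigma_R^2\Vert\bx_r\Vert^2) \leq P_R$. Since the objective is increasing in $t^2$, I would saturate this constraint and substitute $t^2 = P_R/(\Vert\bx_r^H\bH_{SR}\bV\Vert^2 + \sigma_R^2\Vert\bx_r\Vert^2)$. A short calculation shows that the resulting objective, as a function of $\alpha := \Vert\bx_r^H\bH_{SR}\bV\Vert^2$, is again of $\phi$-form and hence strictly increasing in $\alpha$; the optimal $\bV$ is therefore the Rayleigh-quotient maximizer $\bV^\ast = \sqrt{P_S}\,\bH_{SR}^H\bx_r/\Vert\bH_{SR}^H\bx_r\Vert$ of \eqref{eq:sol_v}, which realizes $\alpha = P_S\Vert\bx_r^H\bH_{SR}\Vert^2$. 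After this substitution the objective is once more of $\phi$-form, now in $\beta := \Vert\bH_{RD}\bu\Vert^2$, so I would maximize $\beta$ over $\Vert\bu\Vert=1$ subject to $\bx_r^H\bH_{RR}\bu=0$. The constraint is equivalent to $\bPi\bu = \bu$, so $\Vert\bH_{RD}\bu\Vert^2 = \bu^H\bPi\bH_{RD}^H\bH_{RD}\bPi\bu$; its maximum over the unit sphere is $\lambda_{\max}(\bPi\bH_{RD}^H\bH_{RD}\bPi) = \lambda_{\max}(\bH_{RD}\bPi\bH_{RD}^H) = \lambda_{\max}(\bx_r)$ (using $\bPi^2 = \bPi$ and the fact that $\bA^H\bA$ and $\bA\bA^H$ share their nonzero eigenvalues), attained at the top eigenvector of $\bPi\bH_{RD}^H\bH_{RD}\bPi$, i.e.\ precisely \eqref{eq:sol_xt}. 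Substituting $(\alpha,\beta) = (P_S\Vert\bx_r^H\bH_{SR}\Vert^2,\,\lambda_{\max}(\bx_r))$ into the reduced ratio and dividing through by $P_R$ yields the objective of \eqref{eq:rate_prob_uncon}, proving part~1.

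For part~2, I would observe that \eqref{eq:rate_prob_uncon} is invariant under $\bx_r \mapsto c\bx_r$ for any $c \neq 0$ (both $\Vert\bx_r^H\bH_{SR}\Vert^2$ and $\Vert\bx_r\Vert^2$ rescale by $|c|^2$, and $\bPi$ is unchanged), so the unconstrained optimum fixes only the direction of $\bx_r$. To recover a triple feasible for \eqref{eq:rate_prob_equiv1} that is consistent with the choice $\Vert\bx_t^\ast\Vert=1$ in \eqref{eq:sol_xt}, the saturated magnitude $t$ must be absorbed into $\bx_r$; rescaling by $\sqrt{P_R/(P_S\Vert\bx_r^H\bH_{SR}\Vert^2 + \sigma_R^2\Vert\bx_r\Vert^2)}$ produces \eqref{eq:sol_xr}. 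A direct substitution then confirms that $(\bV^\ast, \bx_t^\ast, \bx_r^\ast)$ satisfies all three constraints (the SI constraint by construction of $\bPi$, the other two with equality) and attains the objective value of \eqref{eq:rate_prob_uncon}. The subtlety I expect to be the main obstacle is the order of elimination: because $\alpha$ appears both in the numerator of the objective and in the relay power constraint, one cannot push $\alpha$ to its maximum without first saturating $t$. Only after eliminating $t$ does the $\alpha$-dependence of the remaining objective revert to $\phi$-form, and verifying this algebraically is the technical heart of the argument; the remaining steps are standard Rayleigh-quotient and eigenvalue computations.
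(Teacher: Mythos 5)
Your proof is correct and follows essentially the same route as the paper's Appendix B: eliminate $\bV$ and the direction of $\bx_t$ by monotonicity of the fractional objective, reduce the SI constraint to an eigenvalue problem via the projector $\bPi$, and exploit the active relay power constraint together with scale invariance to arrive at \eqref{eq:rate_prob_uncon}. The only difference is ordering --- you saturate the relay power constraint first, which turns the paper's perturbation/contradiction argument for the optimality of $\bV$ into a direct monotonicity check --- but this is a cosmetic variation rather than a different method.
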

\begin{proof}
Please see Appendix B.
\end{proof}

Now we consider algorithm design for problem \eqref{eq:rate_prob_uncon}.
\subsubsection{Gradient ascent method in general case}Recall that $\lambda_{\max}(\bx_r)$ is the largest eigenvalue of the matrix $\bH_{RD}\bPi\bH_{RD}^H$. For randomly generated channel matrices $\bH_{RD}$ and $\bH_{RR}$, the nonzero eigenvalues of  the matrix $\bH_{RD}\bPi\bH_{RD}^H$ are distinctive with probability one. As a result, the largest eigenvalue, i.e., $\lambda_{\max}(\bx_r)$, is generally differentiable with respect to $\bx_r$. Let $\bu_1$ be the eigenvector of $\bH_{RD}\bPi\bH_{RD}^H$ corresponding to the largest eigenvalue. Then the gradient of $\lambda_{\max}(\bx_r)$  with respect to $\bx_r$ is given by
\begin{align}\label{eq:grad_lambda}
\nabla \lambda_{\max}&(\bx_r)=\nabla \left(-\frac{\Vert\bu_1^H\bH_{RD}\bH_{RR}^H\bx_r\Vert^2}{\Vert\bH_{RR}^H\bx_r\Vert^2}\right)\nonumber\\
&=-\frac{\bH_{RR}\bH_{RD}^H\bu_1\bu_1^H\bH_{RD}\bH_{RR}^H\bx_r}{\Vert\bH_{RR}^H\bx_r\Vert^2}\nonumber\\
&~~~~~~+\frac{\Vert\bu_1^H\bH_{RD}\bH_{RR}^H\bx_r\Vert^2\bH_{RR}\bH_{RR}^H\bx_r}{\Vert\bH_{RR}^H\bx_r\Vert^4}.
\end{align}
It follows that the gradient of the objective of \eqref{eq:rate_prob_uncon} can be easily computed based on \eqref{eq:grad_lambda}. With the easily obtained gradient, we use the gradient ascent method\cite{cvx_book} to solve problem \eqref{eq:rate_prob_uncon}. \blue{It is readily known that the most costly step of gradient ascent method is the gradient evaluation, which requires complexity of $O(N^3)$ where it is assumed that $N=N_S=N_R=N_T=N_D$ for simplicity. Let $I_g$ denote the number of iterations required by the gradient ascent method. Then its complexity is $O(I_gN^3)$.}
\subsubsection{Global search method when $N_T = 2$}
It is well-known that gradient ascent method is generally a local search method for nonconvex problems. We here consider a special case when the number of transmit antennas at the relay $N_T=2$, which allows one-dimensional global search.

Since the matrix $\bPi\triangleq\bI-\frac{\bH_{RR}^H\bx_r\bx_r^H\bH_{RR}}{\Vert\bH_{RR}^H\bx_r\Vert^2}$ has a zero eigenvalue, we have $\rank(\bPi)=1$ when $N_T=2$. It follows that
\begin{align}
\lambda_{\max}(\bx_r)&=\trace(\bH_{RD}\bPi\bH_{RD}^H)\nonumber\\
&=\trace(\bH_{RD}^H\bH_{RD})-\frac{\Vert\bH_{RD}\bH_{RR}^H\bx_r\Vert^2}{\Vert\bH_{RR}^H\bx_r\Vert^2}.\label{eq:rank1-lambda}
\end{align}
Let $\lambda_1=\lambda_{\max}(\bx_r)$ and define $\tilde{\lambda}_1=\trace(\bH_{RD}^H\bH_{RD})-\lambda_1$.
We can rewrite \eqref{eq:rank1-lambda} as $$\bx_r^H\bH_{RR}(\bH_{RD}^H\bH_{RD}-\tilde{\lambda}_1\bI)\bH_{RR}^H\bx_r=0.$$ It follows that problem  \eqref{eq:rate_prob_uncon} with fixed $\lambda_{\max}(\bx_r)=\lambda_1$ can be recast as
\begin{equation}\label{eq:p17_fix_lambda}
\begin{split}
v(\lambda_1)\triangleq&\max_{\bx_r} \frac{\bx_r^H\bA_1\bx_r}{\bx_r^H\bA_2\bx_r}\\
&\bx_r^H\bA_3\bx_r=0.
\end{split}
\end{equation}

where \begin{align}
&\bA_1 \triangleq \lambda_1P_S\bH_{SR}\bH_{SR}^H,\\
&\bA_2 \triangleq \sigma_R^2\left(\lambda_1+\frac{\sigma_D^2}{P_R}\right)\bI+\sigma_D^2\frac{ P_S}{P_R}\bH_{SR}\bH_{SR}^H,\\
&\bA_3 \triangleq\bH_{RR}\left(\bH_{RD}^H\bH_{RD}-\tilde{\lambda}_1\bI\right)\bH_{RR}^H.
\end{align}
Problem \eqref{eq:p17_fix_lambda} can be transformed to a quadratically constrained quadratic program which can be globally solved via semidefinte relaxation method\cite{Huang2010}. In particular, when $N_R{=}2$ we show in Appendix \ref{appendix_C} that $v(\lambda_1)$ can be explicitly calculated using matrix decomposition and variable substitution. Hence, we can apply one-dimensional search to globally solve problem \eqref{eq:rate_prob_uncon} when $N_T=N_R=2$. That is, we search $\lambda_1$ over an interval (for which $\bA_3$ is not positive definite) and pick the one with the maximum $v(\lambda_1)$ whilst obtaining an optimal solution to problem \eqref{eq:rate_prob_uncon}.
\subsubsection{Low complexity suboptimal solutions}
Since the relay power constraint must hold with equality at the optimality, problem \eqref{eq:rate_prob_equiv_SI2} is equivalent to
\begin{equation}\label{eq:rate_prob_equiv_low_com}
\begin{split}
&\max_{\bx_t, \bx_r} \frac{P_S\Vert\bx_r^H\bH_{SR}\Vert^2\Vert\bH_{RD}\bx_t\Vert^2}{\sigma_R^2\Vert\bx_r\Vert^2\Vert\bH_{RD}\bx_t\Vert^2+\frac{\sigma_D^2}{P_R}(P_S\Vert\bx_r^H\bH_{SR}\Vert^2+\sigma_R^2\Vert\bx_r\Vert^2)}\\
&\st~\bx_r^H\bH_{RR}\bx_t=0,\\
&~~~~~\Vert\bx_t\Vert=1
\end{split}
\end{equation}
which is further equivalent to
\begin{equation}\label{eq:rate_prob_equiv_low_com2}
\begin{split}
&\max_{\bx_t, \bx_r} \frac{P_S\frac{\Vert\bx_r^H\bH_{SR}\Vert^2}{\Vert\bx_r\Vert^2}\Vert\bH_{RD}\bx_t\Vert^2}{\sigma_R^2\Vert\bH_{RD}\bx_t\Vert^2+\frac{\sigma_D^2}{P_R}\left(P_S\frac{\Vert\bx_r^H\bH_{SR}\Vert^2}{\Vert\bx_r\Vert^2}+\sigma_R^2\right)}\\
&\st~\bx_r^H\bH_{RR}\bx_t=0,\\
&~~~~~\Vert\bx_t\Vert=1.
\end{split}
\end{equation}
It is readily seen that the objective function of the above problem is increasing with respect to both the term $\frac{\Vert\bx_r^H\bH_{SR}\Vert^2}{\Vert\bx_r\Vert^2}$ and $\Vert\bH_{RD}\bx_t\Vert^2$. Hence, with fixed $\bx_r$ in \eqref{eq:rate_prob_equiv_low_com2}, the optimal $\bx_t$ can be obtained by solving
\begin{equation}\label{eq:xt}
\begin{split}
&\st~\max_{\bx_t} \Vert\bH_{RD}\bx_t\Vert^2\\
& ~~~~~~\bx_r^H\bH_{RR}\bx_t=0\\
&~~~~~~~\Vert\bx_t\Vert=1,
\end{split}
\end{equation} while with fixed $\bx_t$ in \eqref{eq:rate_prob_equiv_low_com2}, the optimal $\bx_r$ can be obtained by solving
\begin{equation}\label{eq:xr}
\begin{split}
&\st~\max_{\bx_r} \frac{\Vert\bx_r^H\bH_{SR}\Vert^2}{\Vert\bx_r\Vert^2}\\
& ~~~~~~\bx_r^H\bH_{RR}\bx_t=0.
\end{split}
\end{equation}
Problem \eqref{eq:xt} admits a closed-form solution as shown in \eqref{eq:sol_xt} and problem \eqref{eq:xr} can be similarly handled after restricting $\Vert\bx_r\Vert=1$. Motivated by the above observations, we propose two low complexity suboptimal solutions as follows. One is first choosing the leading eigenvector of $\bH_{RD}^H\bH_{RD}$ as $\bx_t$ and then obtaining $\bx_r$ by solving \eqref{eq:xr} followed by scaling $\bx_r$ such the relay power constraint, i.e., computing \eqref{eq:sol_xr}. The other is first choosing the leading eigenvector of $\bH_{SR}\bH_{SR}^H$ as $\bx_r$ and then computing \eqref{eq:sol_xr} and \eqref{eq:sol_xt}. The corresponding $\bV$ can be calculated using \eqref{eq:sol_v}. \blue{Let us assume $N=N_S=N_R=N_T=N_D$ for simplicity. Then it can be easily shown that the complexity of both suboptimal solutions is $O(N^3)$, which is clearly lower than that of the gradient ascent method.}

\begin{remark}
By introducing an additional linear receiver at the destination, the authors of \cite{Suraweera2014} formulated an SINR maximization problem (i.e., (11) in \cite{Suraweera2014}) for joint source-relay-destination optimization under the assumption of single stream transmission, and proposed two suboptimal solutions named transmit ZF (TZF) and receive ZF (RZF). It can be shown that these two suboptimal solutions are in essence the same as our suboptimal solutions, although they have very different forms. Furthermore, it is readily seen that, the suboptimal solutions provided in \cite{Suraweera2014} have a slightly higher complexity than ours since  the computation of the square root inverse of a symmetric positive definite matrix (i.e., $\bE^{-\frac{1}{2}}$ in \cite{Suraweera2014}) is required in (15) of \cite{Suraweera2014}.
\end{remark}

\blue{For simplicity, following \cite{Suraweera2014} we also refer to the first and second low complexity solutions as TZF (corresponding to \eqref{eq:xt}) and RZF (corresponding to  \eqref{eq:xr}), respectively. Particularly, we show in the following proposition that both low complexity solutions are asymptotically optimal  to problem \eqref{eq:rate_prob_uncon} (or equivalently \eqref{eq:rate_prob_equiv1}).
\begin{prop}\label{asymptotic_thm}
Assume that the entries of $\bH_{RD}$ and $\bH_{SR}$ are drawn i.i.d from a zero-mean continuous distribution. Then the following holds true.
\begin{itemize}
\item [1)] TZF is asymptotically optimal to problem \eqref{eq:rate_prob_uncon} when $N_DN_T\to\infty$.
\item [2)] RZF is asymptotically optimal to problem \eqref{eq:rate_prob_uncon} when $N_SN_R\to\infty$.
\end{itemize}
\end{prop}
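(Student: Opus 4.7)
The plan is to sandwich the optimal value of \eqref{eq:rate_prob_uncon} between a universal upper bound and the value achieved by TZF (resp.\ RZF), and then show that both candidates reach this bound in their respective asymptotic regimes. Write the objective of \eqref{eq:rate_prob_uncon} as
\[
g(c,\lambda):=\frac{P_S c\lambda}{\sigma_R^2\lambda+(\sigma_D^2/P_R)(P_S c+\sigma_R^2)},
\]
evaluated at $c=c(\bx_r):=\|\bx_r^H\bH_{SR}\|^2/\|\bx_r\|^2$ and $\lambda=\lambda_{\max}(\bx_r)$. A direct partial-derivative check shows $g$ is strictly increasing in each argument, and combined with the trivial bounds $c(\bx_r)\le\sigma_1^2(\bH_{SR})$ and $\lambda_{\max}(\bx_r)\le\lambda_1(\bH_{RD}^H\bH_{RD})$ valid for every $\bx_r$, this yields the universal upper bound $g_{\mathrm{UB}}:=g(\sigma_1^2(\bH_{SR}),\lambda_1(\bH_{RD}^H\bH_{RD}))$ on the optimum.

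I would then characterize what TZF and RZF actually attain. TZF chooses $\bx_t=\bu_1$, the leading right singular vector of $\bH_{RD}$, so the SI constraint collapses to $\bx_r\perp\bH_{RR}\bu_1$; because $\bu_1$ itself remains feasible in the variational definition of $\lambda_{\max}(\bx_r^{TZF})$, TZF hits $\lambda_{\max}(\bx_r^{TZF})=\lambda_1(\bH_{RD}^H\bH_{RD})$ exactly and its value equals $g(c^{TZF},\lambda_1(\bH_{RD}^H\bH_{RD}))$, where $c^{TZF}$ is the top eigenvalue of $\bH_{SR}\bH_{SR}^H$ restricted to the hyperplane $(\bH_{RR}\bu_1)^{\perp}$. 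Symmetrically, RZF fixes $\bx_r=\bv_1$ (leading left singular vector of $\bH_{SR}$), so $c=\sigma_1^2(\bH_{SR})$ exactly and the only loss is in $\lambda^{RZF}:=\max_{\|\bx_t\|=1,\,\bx_t\perp\bH_{RR}^H\bv_1}\|\bH_{RD}\bx_t\|^2\le\lambda_1(\bH_{RD}^H\bH_{RD})$. Hence asymptotic optimality of TZF reduces to showing $c^{TZF}\to\sigma_1^2(\bH_{SR})$, and that of RZF to $\lambda^{RZF}\to\lambda_1(\bH_{RD}^H\bH_{RD})$.

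For TZF, set $\bw:=\bH_{RR}\bu_1/\|\bH_{RR}\bu_1\|$ and $\alpha:=\bv_1^H\bw$. Plugging the unit vector $(\bv_1-\bar\alpha\bw)/\sqrt{1-|\alpha|^2}\in\bw^{\perp}$ as a trial vector into the constrained Rayleigh quotient and simplifying using $\bH_{SR}\bH_{SR}^H\bv_1=\sigma_1^2(\bH_{SR})\bv_1$ gives
\[
c^{TZF}\ge\sigma_1^2(\bH_{SR})-\frac{|\alpha|^2}{1-|\alpha|^2}\bigl(\sigma_1^2(\bH_{SR})-\bw^H\bH_{SR}\bH_{SR}^H\bw\bigr),
\]
so it suffices to prove $|\alpha|^2\to0$ in probability as $N_DN_T\to\infty$. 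Because the entries of $\bH_{RD}$ are i.i.d.\ zero-mean continuous, $\bu_1$ is (asymptotically) isotropically distributed on the unit sphere in $\mathbb{C}^{N_T}$ and independent of $\bH_{SR}$, which gives $\mathbb{E}[|\bv_1^H\bH_{RR}\bu_1|^2]=\|\bH_{RR}^H\bv_1\|^2/N_T$ and $\mathbb{E}[\|\bH_{RR}\bu_1\|^2]=\|\bH_{RR}\|_F^2/N_T$; concentration for quadratic forms of isotropic vectors (Chebyshev or Hanson--Wright) then forces the ratio $|\alpha|^2$ to vanish in the joint large-dimension limit. The RZF statement follows by the identical argument after swapping the roles of $(\bH_{SR},\bx_r)$ with $(\bH_{RD}^H,\bx_t)$.

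The hardest step will be rigorously justifying the asymptotic isotropy of $\bu_1$ (and the associated concentration of $|\alpha|^2$) under the mild assumption of merely i.i.d.\ zero-mean continuous entries, since Haar-like behavior is not automatic outside the Gaussian/sub-Gaussian case. I would attack this either by invoking universality results for singular vectors of random matrices with i.i.d.\ entries, or by slightly strengthening the moment hypothesis and applying Hanson--Wright bounds directly to the quadratic forms $|\bv_1^H\bH_{RR}\bu_1|^2$ and $\|\bH_{RR}\bu_1\|^2$. Once $|\alpha|^2\to0$ (and its RZF analog) is established, the sandwich $g_{\mathrm{TZF}}\le\mathrm{opt}\le g_{\mathrm{UB}}$ with $g_{\mathrm{TZF}}\to g_{\mathrm{UB}}$ (respectively for RZF) completes the proof.
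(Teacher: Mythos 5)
There is a genuine gap, and it is not where you think it is: the obstruction is dimensional, not distributional. In part 1) the asymptotic regime is $N_DN_T\to\infty$ with $N_S$ and $N_R$ held fixed. Your interference direction $\bw=\bH_{RR}\bu_1/\Vert\bH_{RR}\bu_1\Vert$ is a unit vector in $\Cdom^{N_R}$, a space whose dimension never grows. Even granting perfect isotropy of $\bu_1$, the vector $\bH_{RR}\bu_1$ is then an (approximately) isotropic vector in $\Cdom^{N_R}$, and your own expectation computation gives $\expect[|\alpha|^2]\approx \Vert\bH_{RR}^H\bv_1\Vert^2/\Vert\bH_{RR}\Vert_F^2\approx 1/N_R$, a nonvanishing constant. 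So $|\alpha|^2$ does not concentrate at zero, $c^{TZF}$ stays strictly below $\sigma_1^2(\bH_{SR})$ with non-negligible probability, and since your candidate's value tends to $P_Sc^{TZF}/\sigma_R^2$ while the optimum tends to $P_S\sigma_1^2(\bH_{SR})/\sigma_R^2$, the scheme you analyze for part 1) (unconstrained MRT $\bx_t=\bu_1$ with the zero-forcing burden on $\bx_r$) is in fact \emph{strictly} suboptimal in the limit $N_DN_T\to\infty$ --- this is exactly the TZF/RZF separation visible in the paper's Fig.~3. The same fixed-dimension obstruction defeats the part 2) argument ($\bH_{RR}^H\bv_1$ lives in $\Cdom^{N_T}$ with $N_T$ fixed). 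You have also swapped the two schemes relative to what the proposition needs: the scheme that is asymptotically optimal when $N_DN_T\to\infty$ is the one in which $\bx_r$ is the \emph{unconstrained} leading eigenvector of $\bH_{SR}\bH_{SR}^H$ and $\bx_t$ absorbs the zero-forcing via \eqref{eq:xt} (the paper's prose labeling is admittedly ambiguous, but its Appendix D makes this identification explicit).

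The correct mechanism, and the one the paper uses, is entirely different: asymptotic optimality does not come from the ZF loss vanishing but from the objective \emph{saturating} in $\lambda_{\max}(\bx_r)$. Writing $\bPi=\bU\bE_0\bU^H$ with $\bE_0$ idempotent, one shows $\lambda_{\max}(\bx_r)$ equals the top eigenvalue of an $(N_T-1)\times(N_T-1)$ principal submatrix of $\bU^H\bH_{RD}^H\bH_{RD}\bU$, whence Cauchy interlacing gives $\lambda_2(\bH_{RD}^H\bH_{RD})\leq\lambda_{\max}(\bx_r)\leq\lambda_1(\bH_{RD}^H\bH_{RD})$ \emph{uniformly in} $\bx_r$; by the law of large numbers both bounds diverge when $N_D,N_T>1$ and $N_DN_T\to\infty$. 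Since your own $g(c,\lambda)\to P_Sc/\sigma_R^2$ as $\lambda\to\infty$, problem \eqref{eq:rate_prob_uncon} decouples and reduces to maximizing $\Vert\bx_r^H\bH_{SR}\Vert^2/\Vert\bx_r\Vert^2$ without constraint, which is precisely what the correct scheme does; its non-vanishing loss in $\lambda_{\max}$ is rendered irrelevant by the saturation. Part 2) follows by symmetry after rewriting the problem in terms of $\bx_t$. If you repair your write-up by replacing the concentration step with this interlacing-plus-saturation argument, your sandwich structure survives and the Haar-universality worry you flagged disappears entirely, since no distributional property of the singular vectors is needed.
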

\begin{proof}
Please see Appendix \ref{appendix_D}.
\end{proof}
}
\blue{
Proposition \ref{asymptotic_thm} indicates that, in the single-stream transmission case, when the FD MIMO relay system is equipped with a relatively large number of antennas at source, relay or destination, the proposed low complexity solutions are preferable for system design under the zero-forcing SI condition. Moreover, if it is additionally assumed that $\bH_{RD}$ and $\bH_{SR}$ follow Rayleigh fading, and let $N=N_S=N_R=N_T=N_D$, we then have for very large $N$ that\cite{Ngo2014}
$$\frac{\bH_{SR}\bH_{SR}^H}{N}\approx\bI, \frac{\bH_{RD}^H\bH_{RD}}{N}\approx\bI.$$
Using the above approximation and $\frac{\sigma_D^2\sigma_R^2}{N}\approx0$ for very large $N$, the objective function of problem \eqref{eq:rate_prob_equiv_low_com2}, i.e., the system SINR, reduces to
\begin{equation}\label{eq:asym_opt}
\frac{P_sP_RN}{P_R\sigma_R^2+P_S\sigma_D^2}.
\end{equation}
This implies that, with single-stream transmission and large antenna array, the spectral efficiency of FD MIMO relay systems scales linearly with respect to the logarithm of the number of antennas equipped by the source, relay and destination. This validates an important advantage of large antenna array that they can improve the system spectral efficiency or equivalently save the system transmission power.
}

\section{Penalty-BSUM Algorithm for general case}
In this section, we address problem \eqref{eq:rate_prob}
when the amplification matrix $\bQ$ is not necessarily of rank one. To deal with the trouble arising from some difficult constraints (including the SI constraint),  we resort to a penalty method which penalizes the violation of difficult constraints by adding a constraint-related penalty term to the objective of \eqref{eq:rate_prob}. Moreover, we propose using block successive upper-bound minimization (BSUM) algorithm\cite{Hong2016,Razav2013} to address the penalized problem, hence the name of the proposed algorithm, penalty-BSUM (abbreviated as P-BSUM).

In the following, we first present P-BSUM algorithm in a general framework and then show how it is applied to problem \eqref{eq:rate_prob}.
\subsection{Penalty-BSUM method}
Consider the problem
\begin{equation}\label{eq:prob-PBSUM}
\begin{split}
(P)\quad\quad &\min_{\bx} f(\bx)\\
&\st~ \bh(\bx)=\bzero,\\
&~~~~~\bx\in \mathcal{X}.
\end{split}
\end{equation}
where $f(\bx)$ is a scalar continuously differentiable function and $\bh(\bx)\in \Rdom^{p\times 1}$ is a vector of $p$ continuously differentiable functions; the feasible set $\cX$ is the Cartesian product of $n$ closed convex sets: $\mathcal{X}\triangleq \cX_1\times\cX_2\times\ldots\times \cX_n$ with $\cX_i\subseteq \Rdom^{m_i}$ and $\sum_{i=1}^n m_i=m$ and accordingly the optimization variable $\bx\in\Rdom^m$ can be decomposed as $\bx = (\bx_1, \bx_2, \ldots, \bx_n)$ with $\bx_i\in \cX_i$ $i=1,2,\ldots, n$.

When the equality constraints are very difficult to handle, it is interesting to tackle problem \eqref{eq:prob-PBSUM} using penalty method\cite{Bertsekas_book}, i.e., solving the penalized problem

\begin{equation}\label{eq:penalized_prob-PBSUM}
\begin{split}
(P_{\varrho})\quad\quad &\min_{\bx} f(\bx)+\frac{\varrho}{2} \Vert\bh(\bx)\Vert^2\\
&\st~ \bx\in \mathcal{X}.
\end{split}
\end{equation}
where $\varrho$ is a scalar penalty parameter that prescribes a high cost for the violation of the constraints. In particular, when $\varrho\rightarrow \infty$, solving the above problem yields an approximate solution to problem \eqref{eq:prob-PBSUM}\cite{Bertsekas_book}. However, it is still difficult to globally solve problem $(P_{\varrho})$ when $f(\bx)$ and $\bh(\bx)$ are nonconvex functions. An interesting question is: can we reach a stationary point of problem $(P)$ by solving a sequence of problem $(P_{\varrho})$ to stationary points? This motivates us to design the P-BSUM algorithm.

The P-BSUM algorithm is summarized in TABLE I, where $\bsum(P_{\varrho_k}, \tf_{\varrho_k}, \bx^{k})$ means that, starting from $\bx^k$, the BSUM algorithm\cite{Razav2013} is invoked to iteratively solve problem $P_{\varrho_k}$ with a locally tight lower bound function $\tf_{\varrho_k}$ of $f_{\varrho}(\bx)$. The P-BSUM algorithm is inspired by the penalty decomposition (PD) method which was proposed in \cite{Lu2013,Lu2015} for general rank minimization problems, where each penalized subproblem is solved by a block coordinate descent method. 
Different from the PD method, the penalized problem $(P_{\varrho})$ is locally solved using the BSUM method\cite{Razav2013} in the P-BSUM algorithm. The following proposition shows that any limit point of the sequence generated by the P-BSUM algorithm satisfies the first-order optimality condition of problem $(P)$, hence a stationary point of problem $(P)$.
\begin{theorem}\label{main_thm}
Let $\{\bx^k\}$ be the sequence generated by Algorithm 2 where the termination condition for the BSUM algorithm is
\begin{equation}\label{eq:terminal_cond}
\left\Vert\mathcal{P}_{\cX}\{\bx^k-\nabla f_{\varrho_k}(\bx^k)\}-\bx^k\right\Vert\leq \epsilon_k, \forall k
\end{equation}
with $\epsilon_k\rightarrow 0$ as $k\rightarrow \infty$. Suppose that $\bx^*$ is a limit point of the sequence $\{\bx^k\}$ and $\nabla f(\bx^*)$ is bounded. In addition, assume that \emph{Robinson's condition}\footnote{To precisely describe the first-order optimality condition, some constraint qualification condition is needed. Robinson's condition is a type of constraint qualification condition (which reduces to the classical Mangasarian-Fromovitz constraint qualification condition when $\cX=\Rdom^m$) and the assumption is a standard one that is made in many of previous works on constrained optimization, e.g., \cite{Lu2013, Lu2015, Rusz2006, Izmailov2001}.} \cite[Chap. 3]{Rusz2006} holds for problem $(P)$ at $\bx^*$, i.e.,
$$\left\{\nabla \bh(\bx^*)\bm{d}_{\bx}: \bm{d}_{\bx}\in \cT_{\cX}(\bx^*)\right\}=\Rdom^p$$
where $\cT_{\cX}(\bx^*)$ denotes the tangent cone of $\cX$ at $\bx^*$.
Then $\bx^*$ is a stationary point of problem $(P)$.
\end{theorem}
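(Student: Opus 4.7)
The plan is a standard quadratic-penalty convergence analysis, adapted to the inexact BSUM inner solver. I would organize the argument in four steps: (i) convert the BSUM termination condition \eqref{eq:terminal_cond} into an approximate KKT relation for $(P_{\varrho_k})$; (ii) bound the implicit multiplier sequence using Robinson's condition; (iii) deduce primal feasibility $\bh(\bx^*)=\bzero$; and (iv) pass to the limit to obtain the KKT relation for $(P)$ at $\bx^*$.

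For step (i), set $\bm{y}^k\triangleq\mathcal{P}_{\cX}\{\bx^k-\nabla f_{\varrho_k}(\bx^k)\}$ and use the variational characterization of projection onto the closed convex set $\cX$ to produce some $\bm{n}^k\in\mathcal{N}_{\cX}(\bm{y}^k)$ with $\nabla f_{\varrho_k}(\bx^k)+\bm{n}^k=\bx^k-\bm{y}^k$. Since $\nabla f_{\varrho_k}(\bx^k)=\nabla f(\bx^k)+\nabla\bh(\bx^k)\,\bm{\mu}^k$ with the multiplier estimate $\bm{\mu}^k\triangleq\varrho_k\bh(\bx^k)$, this rewrites as
\[
-\nabla f(\bx^k)-\nabla\bh(\bx^k)\,\bm{\mu}^k+\bm{e}^k\in\mathcal{N}_{\cX}(\bm{y}^k),
\]
where $\|\bm{e}^k\|=\|\bx^k-\bm{y}^k\|\leq\epsilon_k\to 0$, and hence $\bm{y}^k\to\bx^*$ along the same subsequence on which $\bx^k\to\bx^*$.

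Step (ii) is the crux and the main obstacle. I would argue by contradiction: suppose $\|\bm{\mu}^k\|\to\infty$, set $\bar{\bm{\mu}}^k=\bm{\mu}^k/\|\bm{\mu}^k\|$ and extract a further subsequence with $\bar{\bm{\mu}}^k\to\bar{\bm{\mu}}^*$, $\|\bar{\bm{\mu}}^*\|=1$. Dividing the displayed inclusion by $\|\bm{\mu}^k\|$, using boundedness of $\nabla f(\bx^*)$, $\epsilon_k\to 0$, and the cone property of $\mathcal{N}_{\cX}$, then invoking outer semicontinuity of the normal-cone mapping for a closed convex set as $\bm{y}^k\to\bx^*$, one arrives at $-\nabla\bh(\bx^*)\,\bar{\bm{\mu}}^*\in\mathcal{N}_{\cX}(\bx^*)$. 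Robinson's condition now supplies some $\bm{d}^*\in\cT_{\cX}(\bx^*)$ whose Jacobian image equals $-\bar{\bm{\mu}}^*$; pairing $-\nabla\bh(\bx^*)\,\bar{\bm{\mu}}^*$ with $\bm{d}^*$ through the adjoint gives $\|\bar{\bm{\mu}}^*\|^2=1>0$, yet the polarity between $\mathcal{N}_{\cX}(\bx^*)$ and $\cT_{\cX}(\bx^*)$ forces this pairing to be non-positive, a contradiction. Hence $\{\bm{\mu}^k\}$ is bounded.

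Once boundedness is in hand, steps (iii)--(iv) are routine. Boundedness of $\bm{\mu}^k=\varrho_k\bh(\bx^k)$ combined with $\varrho_k\to\infty$ forces $\bh(\bx^k)\to\bzero$, so $\bh(\bx^*)=\bzero$ by continuity, and $\bx^*\in\cX$ by closedness. Extracting a further subsequence with $\bm{\mu}^k\to\bm{\mu}^*$ and letting $k\to\infty$ in the displayed inclusion (again via outer semicontinuity) yields $-\nabla f(\bx^*)-\nabla\bh(\bx^*)\,\bm{\mu}^*\in\mathcal{N}_{\cX}(\bx^*)$. Together with primal feasibility, this is exactly the first-order KKT condition of $(P)$ at $\bx^*$, so $\bx^*$ is a stationary point. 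Without Robinson's qualification the multiplier sequence could diverge and this closing limit step would collapse, which is why step (ii) is the essential hurdle.
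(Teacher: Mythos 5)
Your proposal is correct and follows essentially the same route as the paper's proof: both convert the termination test into an approximate first-order condition via the projection characterization, define the multiplier estimate $\bmu^k=\varrho_k\bh(\bx^k)$, prove its boundedness by the normalization-and-contradiction argument under Robinson's condition, deduce $\bh(\bx^*)=\bzero$ from $\varrho_k\to\infty$, and pass to the limit along a subsequence. The only difference is presentational---you phrase the steps as normal-cone inclusions with outer semicontinuity, while the paper works with the equivalent explicit variational inequalities $\left(\bx-(\bx^k+\bs^k)\right)^T\left(\nabla f(\bx^k)+\nabla\bh(\bx^k)^T\bmu^k+\bs^k\right)\geq 0$ for all $\bx\in\cX$.
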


\begin{proof}
See Appendix E.
\end{proof}


\begin{table}
\centering
\caption{Algorithm 1: P-BSUM algorithm for problem \eqref{eq:P-prob}}\label{tab:PBSUM_alg}
\begin{tabular}{|p{2.5in}|}
\hline
\begin{itemize}
\item [0.] initialize $\bx^{0}\in \cX$, $\varrho_0>0$, and set $c>1$, $k=0$
\item [1.]\; \textbf{repeat}
\item [2.] \; \quad\quad $\bx^{k+1}=\bsum(P_{\varrho_k}, \tf_{\varrho_k}, \bx^{k})$
\item [3.]\;\quad\quad   $\varrho_{k+1} = c\varrho_k$
\item [4.]\;\quad\quad  $k=k+1$
\item [5.]\; \textbf{until} some termination criterion is met
\end{itemize}
\\
\hline
\end{tabular}\vspace{-7pt}
\end{table}

\begin{remark}
The termination condition \eqref{eq:terminal_cond} is used to establish the convergence of the P-BSUM algorithm. In practice, however, it is also reasonable to terminate the BSUM algorithm based on the progress of the objective value $f_{\varrho}(\bx^k)$, i.e.,
$\frac{\vert f_{\varrho}(\bx^k)-f_{\varrho}(\bx^{k-1})\vert}{\vert f_{\varrho}(\bx^{k-1})\vert}\leq \epsilon_k$. The advantage of this termination condition is the ease of computation in contrast to $\mathcal{P}_\cX$ when $\cX$ is complicated. In addition, since the penalty value $\Vert\bh(\bx)\Vert$ vanishes when $\varrho$ goes to infinity, a practical choice of the termination condition for the P-BSUM algorithm is $\Vert\bh(\bx^k)\Vert_{\infty}\leq \epsilon_O$. Here, $\epsilon_O$ is some prescribed small constant.
\end{remark}
\begin{remark}
In each iteration of Algorithm 1, we increase the penalty parameter $\varrho_k$ by a factor of $c$. Intuitively, a choice of large $c$ would push $\Vert\bh(\bx^k)\Vert^2$ to quickly get close to zero. However, it would also render the penalty problem ill-conditioned and result in slow convergence of the BSUM algorithm. Therefore, a choice of $c$ should be appropriately made to balance the rate of convergence and the violation of the constraints. In our numerical examples, the factor $c$ could be set within the interval $(1~3]$.
\end{remark}
\subsection{The P-BSUM for problem \eqref{eq:rate_prob}}
In this subsection, we first derive a reformulation of problem \eqref{eq:rate_prob} and then apply the P-BSUM method to the reformulation.
\subsubsection{Reformulation of problem \eqref{eq:rate_prob}}
To efficiently make use of the BSUM algorithm, we introduce a set of auxiliary matrix variables $\{\bS, \tS, \tV, \tQ, \bR\}$. Define the variable set $\cal{X}\triangleq\{\bQ, \bV, \bS, \tS, \tV, \tQ, \bR\}$. Then we can rewrite problem   \eqref{eq:rate_prob} equivalently as

\begin{equation}\label{eq:prob-eq8}
\begin{split}
&\max_{\cX} \log\det\Bigg(\bI+\bH_{RD}\bS\bS^H\bH_{RD}^H\times\\
&~~~~~~~~~~~~~~~~~~\bigg(\sigma_R^2\bH_{RD}\bQ\bQ^H\bH_{RD}^H+\sigma_D^2\bI\bigg)^{-1}\Bigg)\\
&\st~\trace\left(\tS\tS^H\right)+\trace\left(\tQ\tQ^H\right)\leq P_R,\\
&~~~~~\trace(\bV\bV^H)\leq P_S,\\
&~~~~~\bQ\bH_{SR}\tV=\tS,\\
&~~~~~\bR^H\bQ=0,\\
&~~~~~\bR^H=\bQ\bH_{RR},\\
&~~~~~\bS=\tS,\\
&~~~~~\sigma_R\bQ = \tQ,\\
&~~~~~\bV=\tV.
\end{split}
\end{equation}
where the fourth and fifth constraints are equivalent to the SI constraint in \eqref{eq:rate_prob}; the first, third, sixth, seventh, and eighth constraints correspond to the relay power constraint in  \eqref{eq:rate_prob}.
By penalizing the last six constraints of the above problem, we get a penalized version of problem \eqref{eq:prob-eq8} as follows
\begin{equation}\label{eq:P-prob}
\begin{split}
&\max_{\cX} \log\det\Bigg(\bI+\bH_{RD}\bS\bS^H\bH_{RD}^H\times\\
&~~~~~~~~~~~~~~~~\bigg(\sigma_R^2\bH_{RD}\bQ\bQ^H\bH_{RD}^H+\sigma_D^2\bI\bigg)^{-1}\Bigg)\\
&~~-\rho\Bigg(\Vert\sigma_R\bQ-\tQ\Vert^2+\Vert\bS-\tS\Vert^2+\Vert\bV-\tV\Vert^2\\
&~~+\Vert\bR^H\bQ\Vert^2+\Vert\bR^H-\bQ\bH_{RR}\Vert^2+\Vert\bQ\bH_{SR}\tV-\tS\Vert^2\Bigg)\\
&\st~\trace\left(\tS\tS^H\right)+\trace\left(\tQ\tQ^H\right)\leq P_R\\
&~~~~~\trace(\bV\bV^H)\leq P_S\\
\end{split}
\end{equation}
where $\rho$ is a scalar penalty parameter. It is easily seen that a large $\rho$ prescribes a high cost for the violation of the constraints. In particular, when $\rho\rightarrow \infty$, a solution to the above problem is an approximate solution to problem  \eqref{eq:rate_prob}. In the following, we consider how to address problem \eqref{eq:P-prob} with fixed $\rho$.

\subsubsection{BSUM algorithm for \eqref{eq:P-prob}}
The BSUM algorithm is employed  to address the nonconvex problem \eqref{eq:P-prob}. The basic idea behind the BSUM algorithm for a maximization (resp., minimization) problem is to successively maximize a locally tight lower (resp., upper) bound of the objective, finally reaching a stationary point of the problem. Hence, the key to the BSUM algorithm applied to \eqref{eq:P-prob} is to find a locally tight lower bound for the objective of problem \eqref{eq:P-prob}. For ease of exposition, we define
\begin{align}
&R(\bS,\bQ) \triangleq \log\det\Bigg(\bI+\bH_{RD}\bS\bS^H\bH_{RD}^H\times\nonumber\\
&~~~~~~~~~~~~~~~~\bigg(\sigma_R^2\bH_{RD}\bQ\bQ^H\bH_{RD}^H+\sigma_D^2\bI\bigg)^{-1}\Bigg),\\
&\cE(\bU, \bS, \bQ) \triangleq \left(\bI-\bU^H\bH_{RD}\bS\right)\left(\bI-\bU^H\bH_{RD}\bS\right)^H\nonumber\\
&~~~~~+\sigma_R^2\bU^H\bH_{RD}\bQ\bQ^H\bH_{RD}^H\bU+\sigma_D^2\bU^H\bU.
\end{align}
Then, by applying the popular WMMSE algorithmic framework\cite{Shi2011}, we can obtain a locally tight lower bound of $R(\bS,\bQ)$ as follows

\begin{align}
&R(\bS,\bQ)=\max_{\bW,\bU} \log\det(\bW)-\trace(\bW\cE(\bU, \bS, \bQ))+d\nonumber\\
&\geq\log\det(\barW)-\trace(\barW\cE(\barU, \bS, \bQ))+d, \forall \bQ, \bS, \barQ, \barS.\nonumber
\end{align}
where \begin{align}
&\barU = \bigg(\sigma_R^2\bH_{RD}\barQ\barQ^H\bH_{RD}^H+\sigma_D^2\bI\bigg)^{-1}\bH_{RD}\barS,\\
&\barW = (\bI-\barU^H\bH_{RD}\barS)^{-1}.
\end{align}
Using the above result, we can obtain a locally tight lower bound for the objective of problem \eqref{eq:P-prob}, i.e.,
$$\log\det(\barW)-E_{\rho}(\cX)+d$$
where \begin{align}
&E_{\rho}(\cX)\triangleq \trace(\barW\cE(\barU, \bS, \bQ))\nonumber\\
&+\rho\Bigg(\Vert\sigma_R\bQ-\tQ\Vert^2+\Vert\bS-\tS\Vert^2+\Vert\bV-\tV\Vert^2\\
&+\Vert\bR^H\bQ\Vert^2+\Vert\bR^H-\bQ\bH_{RR}\Vert^2+\Vert\bQ\bH_{SR}\tV-\tS\Vert^2\Bigg).\nonumber
\end{align}
The BSUM algorithm successively maximizes this lower bound with respect to one block of variables while fixing the others, equivalently, solve the following problem in a block coordinate descent fashion
\begin{equation}\label{eq:P-prob-min}
\begin{split}
&\min_{\cX} E_{\rho}(\cX)\\
&\st~\trace\left(\tS\tS^H\right)+\trace\left(\tQ\tQ^H\right)\leq P_R,\\
&~~~~~\trace(\bV\bV^H)\leq P_S.
\end{split}
\end{equation}
Specifically, in each iteration of the BSUM algorithm, we perform the following three steps according to the block structure of the optimization variables:

In \underline{\textbf{Step 1}}, we solve \eqref{eq:P-prob-min} for $(\tQ, \tS)$, $\bR$ and $\bV$ while fixing $(\bQ, \bS, \tV)$. The corresponding problem can be decomposed into  the following three independent subproblems.

The first subproblem with respect to $(\tQ, \tS)$ is
\begin{equation}\label{eq:P-prob-min-subp1}
\begin{split}
&\min_{\tQ, \tS} \Vert\sigma_R\bQ-\tQ\Vert^2+\Vert\bS-\tS\Vert^2+\Vert\bQ\bH_{SR}\tV-\tS\Vert^2\\
&\st~\trace\left(\tS\tS^H\right)+\trace\left(\tQ\tQ^H\right)\leq P_R.
\end{split}
\end{equation}
By completing the square, the above problem can be equivalently written as
\begin{equation}\label{eq:P-prob-min-subp1-eq}
\begin{split}
&\min_{\tQ, \tS} \Vert\tQ-\sigma_R\bQ\Vert^2+\left\Vert\tS-\frac{1}{2}(\bS+\bQ\bH_{SR}\tV)\right\Vert^2\\
&\st~\trace\left(\tS\tS^H\right)+\trace\left(\tQ\tQ^H\right)\leq P_R.
\end{split}
\end{equation}
Solving problem \eqref{eq:P-prob-min-subp1-eq} is equivalent to computing a projection of the point $\left(\sigma_R\bQ, \frac{1}{2}(\bS+\bQ\bH_{SR}\tV)\right) $ onto the set
$\Omega_1 \triangleq\left\{(\tQ, \tS)~|~\trace\left(\tS\tS^H\right)+\trace\left(\tQ\tQ^H\right)\leq P_R\right\}$,
which admits a closed-form solution given  by
\begin{equation}\label{eq:bsum-sol1}
(\tQ,\tS)=\mathcal{P}_{\Omega_{1}}\left\{\left(\sigma_R\bQ, \frac{1}{2}(\bS+\bQ\bH_{SR}\tV)\right)\right\}.
\end{equation}

The second subproblem with respect to $\bV$ is equivalent to computing a projection of the point $\tV $ onto the set
$\Omega_2 \triangleq\left\{\bV~|~\trace(\bV\bV^H)\leq P_S\right\}$,
whose solution is given by
\begin{equation}\label{eq:bsum-sol2}
\bV=\mathcal{P}_{\Omega_{2}}\{\tV\}.
\end{equation}

The third subproblem with respect to $\bR$ is an unconstrained quadratic optimization problem which admits a closed-form solution as follows
\begin{equation}\label{eq:bsum-sol3}
\bR = (\bI+\bQ\bQ^H)^{-1}\bH_{RR}^H\bQ^H.
\end{equation}

In \underline{\textbf{Step 2}}, we solve \eqref{eq:P-prob-min} for $\bQ$ and $\bS$ given $(\tQ, \tS, \tV, \bR)$. The corresponding problem can be decomposed into two subproblems. The first subproblem with respect to $\bQ$ is
\begin{equation}
\begin{split}
&\min \sigma_R^2\trace(\barW\bU^H\bH_{RD}\bQ\bQ^H\bH_{RD}^H\bU)+\rho\Bigg(\Vert\sigma_R\bQ-\tQ\Vert^2+\\
&+\Vert\bR^H\bQ\Vert^2+\Vert\bR^H-\bQ\bH_{RR}\Vert^2+\Vert\bQ\bH_{SR}\tV-\tS\Vert^2\Bigg).
\end{split}
\end{equation}
Checking the first order optimality condition of the above problem yields
\begin{align}\label{eq:bsum-sol4}
&\left(\frac{\sigma_R^2}{\rho}\bH_{RD}^H\barU\barW\barU^H\bH_{RD}+\sigma_R^2\bI+\bR\bR^H\right)\bQ\nonumber\\
&+\bQ(\bH_{RR}\bH_{RR}^H+\bH_{SR}\tV\tV^H\bH_{SR}^H)\\
&=\bR^H\bH_{RR}^H+\tS\tV^H\bH_{SR}^H+\sigma_R\tQ\nonumber
\end{align}
which can be recast as a standard linear equation by vectorizing $\bQ$ and thus easily solved.

The second subproblem with respect to $\bS$ is an unconstrained quadratic optimization problem which admits a closed-form solution as follows
\begin{align}\label{eq:bsum-sol5}
\bS=\left(\rho\bI+\bH_{RD}^H\barU\barW\barU^H\bH_{RD}\right)^{-1}(\rho\tS+\bH_{RD}^H\barU\barW)
\end{align}

In \underline{\textbf{Step 3}},  we solve \eqref{eq:P-prob-min} for $\tV$ given $(\bV, \bQ, \tS)$. The corresponding problem is an unconstrained quadratic optimization problem. Checking its first-order optimality condition yields a closed-form solution as follows
\begin{equation}\label{eq:bsum-sol6}
\tV =(\bI+\bH_{SR}^H\bQ^H\bQ\bH_{SR})^{-1}(\bV+\bH_{SR}^H\bQ^H\tS).
\end{equation}

Given (\ref{eq:bsum-sol1}-\ref{eq:bsum-sol6}), we summarize the BSUM algorithm for problem \eqref{eq:P-prob} in TABLE \ref{tab:BSUM_alg}. Combining TABLE I \& II, we can finally establish the P-BSUM algorithm for problem \eqref{eq:rate_prob}. \blue{For ease of complexity analysis, let us assume $N=N_S=N_R=N_T=N_D>d$. Then it is easily seen that, the per-iteration complexity of the BSUM algorithm in TABLE II is dominated by Step 7, which is $O(N^6)$. Hence, the complexity of the P-BSUM algorithm is $O(I_1I_2N^6)$, where $I_1$ and $I_2$ represent the maximum numbers of iterations required by Algorithm 1 and Algorithm 2, respectively.}
\begin{table}
\centering
\caption{Algorithm 2: BSUM algorithm for problem \eqref{eq:P-prob}}\label{tab:BSUM_alg}
\begin{tabular}{|p{3.2in}|}
\hline
\begin{itemize}
\item [0.] initialize $(\bQ, \bS, \bV)$ such that the power constraints and set $\tV=\bV$
\item [1.]\; \textbf{repeat}
\item [2.] \; \quad\quad $\barU = \bigg(\sigma_R^2\bH_{RD}\bQ\bQ^H\bH_{RD}^H+\sigma_D^2\bI\bigg)^{-1}\bH_{RD}\bS$
\item [3.]\;\quad\quad  $\barW = (\bI-\barU^H\bH_{RD}\bS)^{-1}$
\item [4.]\;\quad\quad  $(\tQ,\tS)=\mathcal{P}_{\Omega_{1}}\left\{\left(\sigma_R\bQ, \frac{1}{2}(\bS+\bQ\bH_{SR}\tV)\right)\right\}$
\item [5.]\;\quad\quad  $\bV=\mathcal{P}_{\Omega_{2}}\{\tV\}$
\item [6.]\; \quad\quad $\bR = (\bI+\bQ\bQ^H)^{-1}\bH_{RR}^H\bQ^H$
\item [7.]\; \quad\quad update $\bQ$ by solving \eqref{eq:bsum-sol4} given $(\barU, \barW, \tQ, \tS, \tV, \bR)$
\item [8.]\; \quad\quad $\bS=\left(\rho\bI+\bH_{RD}^H\barU\barW\barU^H\bH_{RD}\right)^{-1}(\rho\tS+\bH_{RD}^H\barU\barW)$
\item [9.]\; \quad\quad $\tV =(\bI+\bH_{SR}^H\bQ^H\bQ\bH_{SR})^{-1}(\bV+\bH_{SR}^H\bQ^H\tS)$
\item [10.]\; \textbf{until} some termination criterion is met
\end{itemize}
\\
\hline
\end{tabular}\vspace{-7pt}
\end{table}

\section{Numerical Results}
This section presents numerical results to illustrate the rate performance of the proposed joint source-relay design methods. We set the noise power $\sigma_R^2=\sigma_D^2=\sigma^2$, the maximum source/relay power $P_S=P_R=P$, and define $SNR\triangleq 10\log_{10}\frac{P}{\sigma^2}$. Unless otherwise specified, we set $P=10$ dB and $\sigma^2=0$ dB, and assume that $N_S=N_D=N_{SD}$ and $N_T=N_R=N_{TR}$ for simplicity. The parameters\footnote{The parameter $c$ can be also chosen around 2 and  $\epsilon_{k}=\frac{\epsilon_{k-1}}{c}$ is used to generate a decreasing sequence of  $\epsilon_k$. Meanwhile, to avoid some numerical issue and also escape from the possible slow convergence, we terminate the BSUM algorithm once the number of iterations exceed $1000$. }  $c=2$, $\epsilon_0=\varrho_0=0.001$, and $\epsilon_O=1e-6$ are used for the P-BSUM algorithm. Moreover, it is assumed that the source-relay and relay-destination channels experience independent Rayleigh flat fading. Furthermore, each element of the residual SI channel $\bH_{RR}$ is modeled as a complex Gaussian distributed random variable with zero mean and variance $-20$ dB. Note that all the simulation results are averaged over $1000$ independent channel realizations.

\blue{
In our simulations, we introduce two benchmark schemes for performance comparison. The first one is obtained by simply ignoring the zero-forcing SI constraint in \eqref{eq:rate_prob} and thus provides a performance upper bound that is useful to evaluate the proposed algorithms. The second one is the conventional \emph{two-phase} half-duplex MIMO relaying scheme, which is equivalent to setting $\bH_{RR}=\bzero$ in \eqref{eq:rate_prob} and meanwhile halving the objective value. Thus, the upper bound value provided by the first benchmark scheme is twice the rate value achieved by the half-duplex scheme. To obtain these two values, we use the optimization framework provided in \cite{Rong2009} to address the half-duplex system rate maximization problem\footnote{Note that the half-duplex system rate maximization problem can be globally solved in the rank-1 case, but in general global optimality cannot be easily achieved for the general case. Hence, technically speaking, the upper bound values provided in the plots for the general case may not be the \emph{true} upper bound values. However, they are still useful for performance evaluation.}.}

\subsection{The Rank-1 case}
\blue{
The rank-1 case happens when the FD relay is equipped with no more than three transmit/receive antennas (see Prop. 2.1) or when only a single stream is transmitted each time. In this case, the rate maximization problem reduces to the simple form \eqref{eq:rate_prob_equiv1} and allows efficient solutions. Figure \ref{fig:fig2} illustrates that the system rates achieved by various methods increase with the SNR when $N_{SD}=N_{TR}=2$. It can be observed that TZF and RZF achieve very similar performance. This is because that the two low complexity algorithms (equivalently TZF and RZF) are built on problem \eqref{eq:rate_prob_equiv_low_com} which has statistically cyclic symmetry in $\bx_t$ and $\bx_r$ when the system is symmetric\footnote{Note that we can restrict $\Vert\bx_r\Vert$=1 in \eqref{eq:rate_prob_equiv_low_com} without loss of optimality. Then it is readily known that the roles of $\bx_t$ and $\bx_r$ are exchangeable in a statistical sense in the symmetry case.}, i.e., $N_T=N_R$, $T_S=N_D$, $P_S=P_R$, and $\sigma_D^2=\sigma_R^2$. Moreover, it is seen that the gradient method can achieve the maximum system rate as the global search method does and outperforms the TZF/RZF method. Furthermore, with the aid of the upper bound values, it is observed that the FD scheme achieves approximately double rate of the HD scheme. This implies that the zero-forcing SI condition does not impact much on the rate of the FD scheme in the rank-1 case.}

\blue{
Figure \ref{fig:fig3} shows the average system rate versus the number of relay transceiver antennas $N_T$ and $N_R$. Differently from the symmetry case, TZF and RZF could exhibit very different performance when $N_T$ and $N_R$ are not equal. Specifically, Fig. \ref{fig:subfig:fig3a} (resp. \ref{fig:subfig:fig3b}) indicates that RZF (resp. TZF) is preferable over TZF (resp. RZF) and the gradient method when the number of relay receive (resp. transmit) antennas is relatively larger than the number of relay transmit (resp. receive) antennas. Moreover, it is seen that, RZF/TZF can achieve asymptotic optimality as the number of relay receive/transmit antennas increases. This validates the result of Proposition \ref{asymptotic_thm}. In addition, it is again observed that the FD scheme significantly outperforms the HD scheme in the rank-1 case.}

\blue{
Figure \ref{fig:fig3} shows the average system rate performance of symmetric FD MIMO relay systems with $N=N_{SD}=N_{TR}$ ranging from $2$ to $256$. With the aid of the upper bound, it is seen that both TZF and RZF achieve the optimal performance when $N\geq16$, implying that the low complexity methods are preferred for large-scale FD MIMO systems. Particularly, it can be observed that the average system rate scales indeed linearly with respect to $\log_2(N)$ when $N$ ranges from $16$ to $256$, as predicted by \eqref{eq:asym_opt}. This implies that the spectral efficiency of FD MIMO relay systems can be improved (or equivalently the system transmission power can be saved) by using large-scale antennas.}

\begin{figure}[hbpt]
\centering
\includegraphics[width=2.5in]{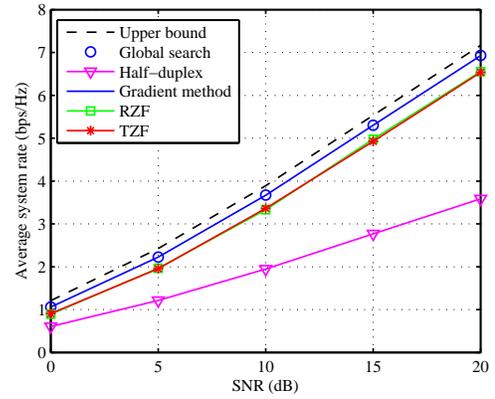}
\caption{The average system rate versus the SNR when $N_{SD}=N_{TR}=2$.}
\label{fig:fig2}
\end{figure}

\begin{figure}[htbp]
\centering \subfigure[The average system rate Vs. $N_R$, with $N_T=2$.]{
\label{fig:subfig:fig3a} 
\includegraphics[width=0.35\textwidth]{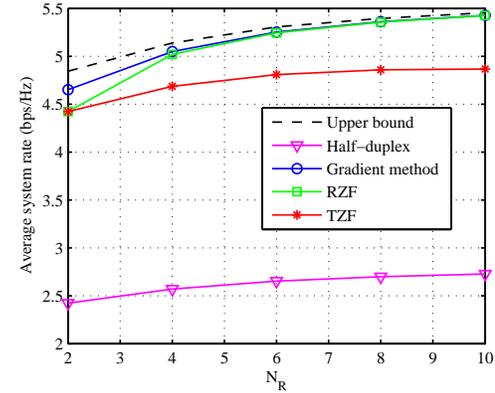}}
\hspace{-5pt} \subfigure[The average system rate Vs. $N_T$, with $N_R=2$.]{
\label{fig:subfig:fig3b} 
\includegraphics[width=0.35\textwidth]{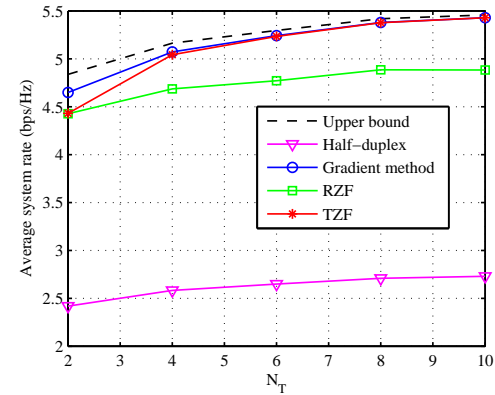}}
\caption{TZF/RZF achieves asymptotic optimality when $N_T$/$N_R$ increases with fixed $N_{SD}=4$.}
\label{fig:fig3}
\end{figure}

\begin{figure}[hbpt]
\centering
\includegraphics[width=2.5in]{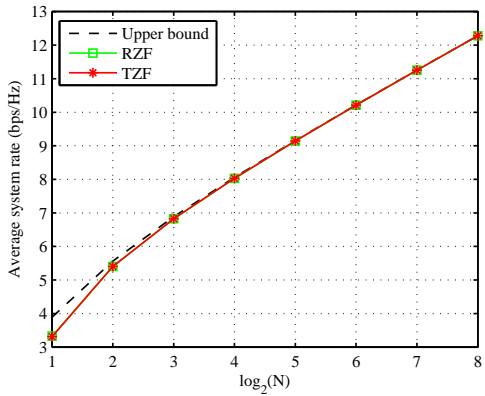}
\caption{The system rate scales linearly with respect to $\log_2(N)$ when $N=N_{SD}=N_{TR}$ is very large.}
\label{fig:fig4}
\end{figure}

\subsection{The general case}
\blue{
The general case, i.e., the rank of the amplification matrix $\bQ$ is not necessarily one, corresponds to the multiple-stream transmission case. For comparison, we also demonstrate the performance of the gradient method where it is assumed that $\rank(\bQ)=1$ (i.e., the single-stream transmission case).}

\blue{
Figure \ref{fig:fig5} illustrates the average system rate versus the SNR.
It is observed that the P-BSUM method can achieve better rate performance than the gradient method in the high SNR region. This implies that, using multiple-stream transmission, the spectral efficiency of FD MIMO relay systems can be further improved as compared to single-stream transmission. Moreover, it is seen that the FD scheme outperforms the HD scheme as in the rank-1 case. However, the former cannot achieve the double rate of the latter. This indicates that the zero-forcing SI condition impacts more on the system rate in the general case than in the rank-1 case, which is intuitively right because more zero-forcing constraints are imposed on the system in the general case. In addition, it is interesting to note that the FD scheme of single-stream transmission could outperform the HD scheme of multiple-stream transmission in the low SNR region. This further validates the advantage of the FD scheme over the HD scheme.}
\begin{figure}[hbpt]
\centering
\includegraphics[width=2.5in]{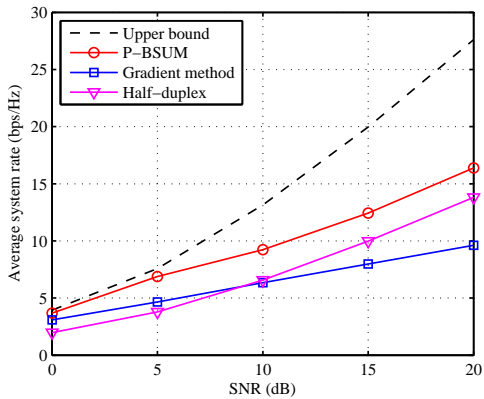}
\caption{The average system rate versus the SNR when $N=6$.}
\label{fig:fig5}
\end{figure}

\begin{figure}[hbpt]
\centering
\includegraphics[width=2.5in]{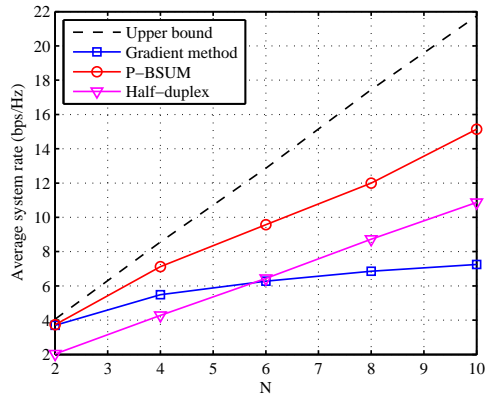}
\caption{The average system rate versus $N$ when $N_{SD}=N_{TR}=d=N$.}
\label{fig:fig6}
\end{figure}

\blue{
Figure \ref{fig:fig6} illustrates that the average system rate increases with the number of streams/source-relay antennas. Again, it is observed that the FD MIMO relay system of multiple-stream transmission could achieve significantly better performance than that of single-stream transmission, especially when $N$ is large. Particularly, one can see that the P-BSUM method achieves the same rate as the gradient method when $N=2$ (i.e., $N_R=N_T=d=2$). This validates the result of Proposition 2.1, i.e., we have $\rank(\bQ)=1$ when $N_R=N_T=d=2$. In addition, it is again seen that the FD scheme is always better than the HD scheme. Moreover, when the source, relay and destination is respectively equipped with a small number of antennas (i.e., $N<6$), the FD scheme of single-stream transmission could outperform the HD scheme of multiple-stream transmission.}

\section{Conclusion}
This paper have considered joint source-relay design for rate
maximization in FD MIMO AF relay systems with consideration
of relay processing delay. A sufficient condition on the rank one amplification matrix is first derived. Then, for the rank one amplification matrix case, the rate maximization problem is simplified into an unconstrained problem, for which a gradient method is proposed. While for the general case where the relay amplification matrix is not necessarily of rank one, a simple algorithmic framework
P-BSUM has been proposed to address the difficulty
arising from the self-interference constraint. \blue{It is worth mentioning that
the proposed P-BSUM algorithmic framework can be used to tackle other problems with nonlinear coupling constraints.}
\appendices
\blue{
\section{The proof of proposition 2.1}\label{appendix_A}
First, it is trivial to see $\rank(\bQ)>0$. Next, let us consider the case when $N_T\leq N_R\leq 3$. In this case, we have
\begin{align}
\rank(\bQ\bH_{RR}\bQ)&\geq  \rank(\bH_{RR}\bQ)+\rank(\bQ)-N_R\nonumber\\
&=2\rank(\bQ)-N_R\label{eq:rank_ineq}
\end{align}
where the inequality follows from the \emph{Sylvester's rank inequality}\cite{Mtx_book} and the equality is due to the fact that $\rank(\bH_{RR}\bQ)=\rank(\bQ)$ when $\bH_{RR}$ has full column rank. Since $\bQ\bH_{RR}\bQ=\bzero$ and $N_R\leq 3$, it follows from \eqref{eq:rank_ineq} that $\rank(\bQ)=1$. Similarly, we can prove the case when $N_R\leq N_T\leq 3$. This completes the proof.}

\section{The proof of Proposition \ref{prop:prop11}}\label{appendix_B}
Since Part 2) will be clear through the proof of Part 1), we mainly provide the proof of Part 1), which is divided into the following three steps.

In the first step, we show that, given $\bx_r$, the optimal $\bV$ should maximize $\Vert\bx_r^H\bH_{SR}\bV\Vert^2$ subject to the source power constraint. We prove this by contradiction. Assume for contrary that $\Vert\bx_r^H\bH_{SR}\bV\Vert^2$ is not maximized with respect to $\bV$ at the optimality of problem \eqref{eq:rate_prob_equiv1}.
Hence, for fixed $\bx_r$, we would be able to slightly increase $\Vert\bx_r^H\bH_{SR}\bV\Vert^2$  by choosing some suitable $\bV$. Meanwhile, we can decrease the magnitude of $\bx_t$ a little bit so as to keep the terms $\Vert\bx_r^H\bH_{SR}\bV\Vert^2\Vert\bH_{RD}\bx_t\Vert^2$ and $\Vert\bx_r^H\bH_{SR}\bV\Vert^2\Vert\bx_t\Vert^2$ constant. This implies that some feasible $(\bV, \bx_t)$ can be found to increase the objective value, contradicting the optimality. Therefore, $\Vert\bx_r^H\bH_{SR}\bV\Vert^2$ is maximized with respect to $\bV$ at the optimality of problem \eqref{eq:rate_prob_equiv1}. Apparently, (each column of) the optimal $\bV$ should align with the vector $\bH_{SR}^H\bx_r$ and satisfies the source power constraint with equality. As a result, the optimal value of $\Vert\bx_r^H\bH_{SR}\bV\Vert^2$ is equal to $P_S\Vert\bx_r^H\bH_{SR}\Vert^2$.
It follows that problem \eqref{eq:rate_prob_equiv1} can be equivalently written as
\begin{equation}\label{eq:rate_prob_equiv_SI}
\begin{split}
&\max_{\bx_t, \bx_r} \frac{P_S\Vert\bx_r^H\bH_{SR}\Vert^2\Vert\bH_{RD}\bx_t\Vert^2}{\sigma_R^2\Vert\bx_r\Vert^2\Vert\bH_{RD}\bx_t\Vert^2+\sigma_D^2}\\
&\st~P_S\Vert\bx_r^H\bH_{SR}\Vert^2\Vert\bx_t\Vert^2+\sigma_R^2\Vert\bx_r\Vert^2\Vert\bx_t\Vert^2\leq P_R,\\
&~~~~~\bx_r^H\bH_{RR}\bx_t=0.
\end{split}
\end{equation}

In the second step, we show that the SI constraint can be canceled by expressing the terms of $\bx_t$ as functions of $\bx_r$. First, note that, for arbitrary $\alpha$,  $(\alpha\bx_t, \frac{1}{\alpha}\bx_r)$ is an optimal solution to problem \eqref{eq:rate_prob_equiv_SI} if $(\bx_t, \bx_r)$ is optimal.
Hence, without loss of optimality, we can assume $\Vert\bx_t\Vert=1$. Hence, we can rewrite \eqref{eq:rate_prob_equiv_SI} as
\begin{equation}\label{eq:rate_prob_equiv_SI2}
\begin{split}
&\max_{\bx_t, \bx_r} \frac{P_S\Vert\bx_r^H\bH_{SR}\Vert^2\Vert\bH_{RD}\bx_t\Vert^2}{\sigma_R^2\Vert\bx_r\Vert^2\Vert\bH_{RD}\bx_t\Vert^2+\sigma_D^2}\\
&\st~P_S\Vert\bx_r^H\bH_{SR}\Vert^2+\sigma_R^2\Vert\bx_r\Vert^2\leq P_R,\\
&~~~~~\bx_r^H\bH_{RR}\bx_t=0,\\
&~~~~~\Vert\bx_t\Vert=1.
\end{split}
\end{equation}
On the other hand, it is noted that $\bx_t$ lies in the null space of $\bH_{RR}^H\bx_r$. Hence, in terms of the definition of $\bPi$, each $\bx_t$ such that the SI constraint can be expressed in the form of $\bx_t = \bPi\bu$, where $\bu$ is an arbitrary vector. By substituting it into \eqref{eq:rate_prob_equiv_SI2}, we obtain an equivalent problem of \eqref{eq:rate_prob_equiv_SI2} as follows
\begin{equation}\label{eq:rate_prob_equiv_SI3}
\begin{split}
&\max_{\bu, \bx_r} \frac{P_S\Vert\bx_r^H\bH_{SR}\Vert^2\Vert\bH_{RD}\bPi\bu\Vert^2}{\sigma_R^2\Vert\bx_r\Vert^2\Vert\bH_{RD}\bPi\bu\Vert^2+\sigma_D^2}\\
&\st~P_S\Vert\bx_r^H\bH_{SR}\Vert^2+\sigma_R^2\Vert\bx_r\Vert^2\leq P_R,\\
&~~~~~\Vert\bPi\bu\Vert=1.
\end{split}
\end{equation}
Furthermore, it is noted that the objective function is increasing in the term $\Vert\bH_{RD}\bPi\bu\Vert^2$. Hence, the term $\Vert\bH_{RD}\bPi\bu\Vert^2$ is maximized with respect to $\bu$ at the optimality of problem \eqref{eq:rate_prob_equiv_SI3}. Let $\lambda_{\max}^{\bu}$ denote the optimal value of $\Vert\bH_{RD}\bPi\bu\Vert^2$. Thus, we have
\begin{equation}\label{eq:GEV}
\begin{split}
&\lambda_{\max}^{\bu}=\max_{\bu} \Vert\bH_{RD}\bPi\bPi\bu\Vert^2\\
&\st~~~~\Vert\bPi\bu\Vert=1.
\end{split}
\end{equation}
where we have used the identity $\bPi^2=\bPi$. \eqref{eq:GEV} implies that $\lambda_{\max}^{\bu}$ is the maximum eigenvalue of the matrix $\bPi\bH_{RD}^H\bH_{RD}\bPi$, i.e., $\lambda_{\max}^{\bu}=\lambda_{\max}(\bx_r)$. It follows that problem \eqref{eq:rate_prob_equiv_SI3} reduces to
\begin{equation}\label{eq:rate_prob_equiv_noSI}
\begin{split}
&\max_{\bx_r} \frac{P_S\Vert\bx_r^H\bH_{SR}\Vert^2\lambda_{\max}(\bx_r)}{\sigma_R^2\Vert\bx_r\Vert^2\lambda_{\max}(\bx_r)+\sigma_D^2}\\
&\st~P_S\Vert\bx_r^H\bH_{SR}\Vert^2+\sigma_R^2\Vert\bx_r\Vert^2\leq P_R.
\end{split}
\end{equation}

In the third step, we show that \eqref{eq:rate_prob_equiv_noSI} can be recast as an unconstrained problem. It is noted that the objective of \eqref{eq:rate_prob_equiv_noSI} can be increased by scaling up $\bx_r$. Hence, the inequality constraint must be active at the optimality of \eqref{eq:rate_prob_equiv_noSI}. It follows that problem \eqref{eq:rate_prob_equiv_noSI} is equivalent to
\begin{equation}\label{eq:rate_prob_equiv_noSI2}
\begin{split}
&\max_{\bx_r} \frac{P_S\Vert\bx_r^H\bH_{SR}\Vert^2\lambda_{\max}(\bx_r)}{\sigma_R^2\Vert\bx_r\Vert^2\lambda_{\max}(\bx_r){+}\frac{\sigma_D^2}{P_R}(P_S\Vert\bx_r^H\bH_{SR}\Vert^2{+}\sigma_R^2\Vert\bx_r\Vert^2)}\\
&\st~P_S\Vert\bx_r^H\bH_{SR}\Vert^2+\sigma_R^2\Vert\bx_r\Vert^2= P_R.
\end{split}
\end{equation}
Since scaling $\bx_r$ does not impact the objective value of \eqref{eq:rate_prob_equiv_noSI2}, we can consider solving the unconstrained version of \eqref{eq:rate_prob_equiv_noSI2}, i.e.,
\eqref{eq:rate_prob_uncon} and then scaling its optimal solution $\bx_r$ such that the power constraint $P_S\Vert\bx_r^H\bH_{SR}\Vert^2+\sigma_R^2\Vert\bx_r\Vert^2= P_R$. This completes the proof.

\section{A globally optimal solution to problem \eqref{eq:p17_fix_lambda}}\label{appendix_C}
Here we consider solving  problem \eqref{eq:p17_fix_lambda} with $N_T=N_R=2$. When $\bA_3$ is positive semidefinite, it is readily known that the solution to problem \eqref{eq:p17_fix_lambda} is the unique zero eigenvector. Thus below we consider the case when $\bA_3$ is not positive semidefinite.

Let $\bU\bSigma\bU^H$ be the eigen-decomposition of $\bA_3$ where $\bU$ consists of the orthonormal eigenvectors and $\bSigma$ is a diagonal matrix of eigenvalues. Define $\tA_1=\bU^H\bA_1\bU$ and $\tA_2=\bU^H\bA_2\bU$. With these notations and variable substitution $\tx_r=\bU^H\bx_r$, problem \eqref{eq:p17_fix_lambda} can be equivalently written as
\begin{equation}\label{eq:p17_fix_lambda_eq}
\begin{split}
v(\lambda_1)\triangleq&\max_{\tx_r} \frac{\tx_r^H\tA_1\tx_r}{\tx_r^H\tA_2\tx_r}\\
&\st~\tx_r^H\bSigma\tx_r=0.
\end{split}
\end{equation}
Let $x_1$ and $x_2$ be the first and second entries of $\tx_r$, respectively. Without loss of optimality, we restrict $x_2$ to be nonnegative. Hence, the equality constraint of problem \eqref{eq:p17_fix_lambda_eq} reduces to
$$\mu_1|x_1|^2-\mu_2 x_2^2=0$$
where $\mu_1$ and $\mu_2$ are the absolute values of the first and second diagonal entries of $\bSigma$, respectively. As a result, we obtain $x_2=\sqrt{\frac{\mu_1}{\mu_2}}|x_1|$. Thus, we can write $\tx_r=|x_1|\left[e^{j\angle(x_1)}~\sqrt{\frac{\mu_1}{\mu_2}}\right]^T$. Let $a_{mn}$ denote the $(m,n)$-th entry of $\tA_1$ and $b_{mn}$ denote the $(m,n)$-th entry of $\tA_2$. Then we have
\begin{align*}
\tx_r^H\tA_1\tx_r&=|x_1|^2\left(a_{11}+\frac{\mu_1}{\mu_2}a_{22}+2\sqrt{\frac{\mu_1}{\mu_2}} |a_{12}| \cos(\theta_1)\right),\\
\tx_r^H\tA_2\tx_r&=|x_1|^2\left(b_{11}+\frac{\mu_1}{\mu_2}b_{22}+2\sqrt{\frac{\mu_1}{\mu_2}} |b_{12}| \cos(\theta_2)\right),
\end{align*}
where $\theta_1=\angle(x_1)-\angle(a_{12})$ and $\theta_2=\angle(x_1)-\angle(b_{12})$. Since it holds that $\angle(\tA_2)=\angle(\bU^H\bH_{SR}\bH_{SR}\bU)=\angle(\tA_1)$, we have $\angle(a_{12})=\angle(b_{12})$, equivalently, $\cos(\theta_1)=\cos(\theta_2)$. Therefore, letting $z=\cos(\theta_1)$ and noting $-1\leq z\leq 1$, we can recast problem \eqref{eq:p17_fix_lambda_eq} as
\begin{equation}\label{eq:p17_fix_lambda_eq2}
\begin{split}
v(\lambda_1)\triangleq&\max_{-1\leq z\leq 1} \phi(z, \lambda_1).
\end{split}
\end{equation}
where $\phi(z,\lambda_1)\triangleq\frac{a_{11}+\frac{\mu_1}{\mu_2}a_{22}+2\sqrt{\frac{\mu_1}{\mu_2}} |a_{12}| z}{b_{11}+\frac{\mu_1}{\mu_2}b_{22}+2\sqrt{\frac{\mu_1}{\mu_2}} |b_{12}| z}$.
Since the function $\phi(z, \lambda_1)$ is monotonic with respect to $z$, the optimal $z$ is either $1$ or $-1$. Hence, we have
$$v(\lambda_1)=\max\left(\phi(1, \lambda_1), \phi(-1, \lambda_1)\right).$$

Once we determine the optimal solution $z$ and thus the corresponding $\angle(x_1)$, we can obtain an optimal solution $\bx_r$ to problem \eqref{eq:p17_fix_lambda} as $\bx_r=\bU\left[e^{j\angle(x_1)}~\sqrt{\frac{\mu_1}{\mu_2}}\right]^T$.

\blue{
\section{The proof of Proposition \ref{asymptotic_thm}}\label{appendix_D}
Let us first prove part 1) by inspecting \eqref{eq:rate_prob_uncon} with $N_D,N_T>1$ and $N_DN_T\to\infty$. Our proof is based on an important argument that $\lambda_{\max}(\bx_r)\to\infty$ when $N_D, N_T>1$ and $N_DN_T\to\infty$, with fixed $N_S$ and $N_R$. Thus, we below first prove this argument. Let $\bU\bE_0\bU^H$ denote the eigenvalue decomposition of matrix $\bPi$, with $\bE_0$ being a diagonal matrix of \emph{descendingly ordered} eigenvalues and $\bU$ being a \emph{unitary} matrix whose columns are the corresponding eigenvectors. Since the matrix $\frac{\bH_{RR}^H\bx_r\bx_r^H\bH_{RR}}{\Vert\bH_{RR}^H\bx_r\Vert^2}$ has a unique nonzero eigenvalue equal to $1$, we can infer that the first $N_T{-}1$ diagonal entries of $\bE_0$ are equal to $1$ and the last one is equal to $0$. It follows that $\bE_0^2=\bE_0$. Then we have
\begin{equation}
\begin{split}
\lambda_{\max}(\bx_r)=&\lambda_1(\bH_{RD}\bPi\bH_{RD}^H)\\
=&\lambda_1(\bH_{RD}\bU\bE_0^2\bU^H\bH_{RD}^H)\\
=&\lambda_1(\bE_0\bU^H\bH_{RD}^H\bH_{RD}\bU\bE_0)\\
=&\lambda_1(\bC)
\end{split}
\end{equation}
where $\lambda_i(\bX)$ denotes the $i$-th largest eigenvalue of $\bX$, $\bC$ is the $(N_T{-}1)$ by $(N_T{-}1)$ leading principal submatrix of $\bE_0\bU^H\bH_{RD}^H\bH_{RD}\bU\bE_0$, the third equality follows from the identity $\lambda_1(\bA\bB)=\lambda_1(\bB\bA)$\cite{Mtx_book}, and the last equality is due to the fact that the last row and column of $\bE_0\bU^H\bH_{RD}^H\bH_{RD}\bU\bE_0$ are both zero vectors. Note that $\bC$ is also the $(N_T{-}1)$ by $(N_T{-}1)$ leading principal submatrix of $\bU^H\bH_{RD}^H\bH_{RD}\bU$. Then, according to \cite[Theorem 4.3.8]{Horn_book}, we have $$\lambda_2(\bU^H\bH_{RD}^H\bH_{RD}\bU)\leq \lambda_1(\bC)\leq \lambda_1(\bU^H\bH_{RD}^H\bH_{RD}\bU).$$
Since $\bU$ is a unitary matrix and $\lambda_{\max}(\bx_r){=}\lambda_1(\bC)$, it follows that
$$\lambda_2(\bH_{RD}^H\bH_{RD})\leq \lambda_{\max}(\bx_r)\leq \lambda_1(\bH_{RD}^H\bH_{RD}).$$
Using the assumption on channel coefficients and following the law of large number, it can be shown that both $\lambda_1(\bH_{RD}^H\bH_{RD})$ and $\lambda_2(\bH_{RD}^H\bH_{RD})$ go to infinity when $N_D,N_T>1$ and $N_TN_D{\to}\infty$. As a result, for any $\bx_r$, we have $\lambda_{\max}(\bx_r)\to\infty$ when $N_D,N_T>1$  and $N_TN_D{\to}\infty$. In this case,  problem \eqref{eq:rate_prob_uncon} can be approximated as
\begin{equation}\label{eq:rate_prob_uncon_xr}
\begin{split}
&\max_{\bx_r} \frac{P_S\Vert\bx_r^H\bH_{SR}\Vert^2}{\sigma_R^2\Vert\bx_r\Vert^2}\\
\end{split}
\end{equation}
implying that the optimal $\bx_r$ is approximately the leading eigenvector of $\bH_{SR}\bH_{SR}^H$ and accordingly the optimal $\bx_t$ is given by \eqref{eq:sol_xt} or equivalently \eqref{eq:xt} with fixed $\bx_r$. This completes the proof of part 1).}

\blue{To prove part 2), we first reformulate problem \eqref{eq:rate_prob_equiv_low_com} (i.e., equivalently \eqref{eq:rate_prob_equiv1}) as
\begin{equation}\label{eq:rate_prob_equiv_low_com_xt}
\begin{split}
&\max_{\bx_t, \bx_r} \frac{P_S\Vert\bx_r^H\bH_{SR}\Vert^2\frac{\Vert\bH_{RD}\bx_t\Vert^2}{\Vert\bx_t\Vert^2}}{\sigma_R^2\Vert\bx_r\Vert^2\frac{\Vert\bH_{RD}\bx_t\Vert^2}{\Vert\bx_t\Vert^2}+\frac{\sigma_D^2}{P_R}\left(P_S\Vert\bx_r^H\bH_{SR}\Vert^2+\sigma_R^2\Vert\bx_r\Vert^2\right)}\\
&\st~\bx_r^H\bH_{RR}\bx_t=0,\\
&~~~~~\Vert\bx_t\Vert=1.
\end{split}
\end{equation}
Note that the objective function and the constraint function $\bx_r^H\bH_{RR}\bx_t$ of the above problem is invariant to the scale of $\bv_r$ and $\bv_t$. Hence, problem \eqref{eq:rate_prob_equiv_low_com_xt} is further equivalent to
\begin{equation}\label{eq:rate_prob_equiv_low_com_xt2}
\begin{split}
&\max_{\bx_t, \bx_r} \frac{P_S\Vert\bx_r^H\bH_{SR}\Vert^2\frac{\Vert\bH_{RD}\bx_t\Vert^2}{\Vert\bx_t\Vert^2}}{\sigma_R^2\frac{\Vert\bH_{RD}\bx_t\Vert^2}{\Vert\bx_t\Vert^2}+\frac{\sigma_D^2}{P_R}\left(P_S\Vert\bx_r^H\bH_{SR}\Vert^2+\sigma_R^2\right)}\\
&\st~\bx_r^H\bH_{RR}\bx_t=0,\\
&~~~~~\Vert\bx_r\Vert=1.
\end{split}
\end{equation}
Following similar arguments as that for \eqref{eq:rate_prob_equiv_SI2}-\eqref{eq:rate_prob_equiv_noSI}, we can recast \eqref{eq:rate_prob_equiv_low_com_xt2} as
\begin{equation}\label{eq:rate_prob_unconeq2}
\begin{split}
&\max_{\bx_t} \frac{P_S\Vert\bH_{RD}\bx_t\Vert^2\lambda_{\max}(\bx_t)}{\sigma_R^2\Vert\bH_{RD}\bx_t\Vert^2{+}\frac{\sigma_D^2}{P_R}(P_S\lambda_{\max}(\bx_t){+}\sigma_R^2)\Vert\bx_t\Vert^2}\\
\end{split}
\end{equation}
where $\lambda_{\max}(\bx_t){\triangleq}\lambda_1(\bH_{SR}^H\bPi_t\bH_{SR})$, $\bPi_t{\triangleq}\bI-\frac{\bH_{RR}\bx_t\bx_t^H\bH_{RR}^H}{\Vert\bH_{RR}\bx_t\Vert^2}$.
Note that the above problem has similar form as problem \eqref{eq:rate_prob_uncon}. Thus, by applying similar arguments as above for part 1), we can prove part 2). This completes the proof.}
\section{The proof of Theorem \ref{main_thm}}
First, we show that a key inequality (see \eqref{eq:key_ineq2}) holds for $\{\bx^k\}$. Without loss of generality, we assume that $\bx^k$ converges to $\bx^*$ (otherwise we can restrict to a convergent subsequence of $\{\bx^k\}$). Hence, we have $\bx^*\in \cX$ by noting that $\cX$ is a closed convex set. Let $\bs^k = \mathcal{P}_{\cX}\{\bx^k-\nabla f_{\varrho_k}(\bx^k)\}-\bx^k$, i.e., the current optimality gap.
Then by a well-known property of the projection map $\mathcal{P}_{\cX}$, we have
$$\left(\bx{-}(\bx^k+\bs^k)\right)^T\!\!\left((\bx^k{-}\nabla f_{\varrho_k}(\bx^k)){-}(\bx^k+\bs^k)\right){\leq} 0, \forall k, \bx{\in}\cX.$$
It follows that
\begin{equation}\label{eq:key_ineq}
-\left(\bx-(\bx^k+\bs^k)\right)^T\left(\nabla f_{\varrho_k}(\bx^k)+\bs^k\right)\leq 0, \forall k, \bx\in\cX.
\end{equation}
Define $\bmu^k\triangleq\varrho_k \bh(\bx_k)$. Then we have $\nabla f_{\varrho_k}(\bx^k)=\nabla f(\bx^k)+\nabla \bh(\bx^k)^T\bmu^k$. Plugging this into \eqref{eq:key_ineq} , we obtain
\begin{align}\label{eq:key_ineq2}
&-\left(\bx-(\bx^k+\bs^k)\right)^T\left(\nabla f(\bx^k)+\nabla \bh(\bx^k)^T\bmu^k+\bs^k\right)\nonumber\\
&\quad\quad\quad\quad\quad\quad\quad\quad\quad\quad\quad\quad\quad\quad\quad\leq 0, \forall k, \bx\in\cX.
\end{align}

Next, we prove that $\bmu^k$ is bounded by contradiction and using Robinson condition. Assume, to the contrary, that $\bmu^k$ is unbounded. Define $\bar{\bmu}^k \triangleq \frac{\bmu^k}{\Vert\bmu^k\Vert}$. Since $\{\bar{\bmu}^k\}$ is bounded, there must exist a convergent subsequence $\{\bar{\bmu}^{k_j}\}$. Let $\bmu^{k_j}\rightarrow \bar{\bmu}$ as $j\rightarrow \infty$. On the other hand, since $\nabla f(\bx^*)$ is bounded and $\nabla f(\bx)$ is continuous in $\bx$,
$\nabla f(\bx^k)$ is bounded for sufficiently large $k$. By dividing both sides of \eqref{eq:key_ineq2} by $\Vert\bmu^k\Vert$ and using the boundedness of $\nabla f(\bx^k)$ and $\bs^k$, we have for sufficiently large $j$
\begin{equation}\label{eq:key_ineq3}
-\left(\bx-(\bx^{k_j}+\bs^{k_j})\right)^T\left(\nabla \bh(\bx^{k_j})^T\bar{\bmu}^{k_j}\right)\leq 0, \forall \bx\in\cX.
\end{equation}
Note that $\nabla \bh(\bx)$ is continuous in $\bx$. Moreover,  by assumption $\left\Vert\mathcal{P}_{\cX}\{\bx^k-\nabla f_{\varrho_k}(\bx^k)\}-\bx^k\right\Vert\leq \epsilon_k, \forall k$, we have $\bs^{k}\rightarrow 0$ due to $\epsilon_k\rightarrow 0$ as $k\rightarrow 0$. In addition, it holds that $\bx^{k_j}\rightarrow \bx^*$ and $\bmu^{k_j}\rightarrow \bar{\bmu}$ as $j\rightarrow \infty$. Hence,  taking limits on both sides of \eqref{eq:key_ineq3}, we have
\begin{equation}\label{eq:key_ineq4}
-\left(\bx-\bx^*\right)^T\nabla \bh(\bx^*)^T\bar{\bmu}\leq 0, \forall \bx\in\cX.
\end{equation}
Since Robinson's condition holds for problem $(P)$ at $\bx^*$, there exists some $\bx\in \cX$ and $c>0$ such that $-\bar{\bmu}=c\nabla\bh(\bx^*)(\bx-\bx^*)$\cite{Rusz2006}. This together with \eqref{eq:key_ineq4} imply $\bar{\bmu}=\bzero$, contradicting the identity $\Vert\bar{\bmu}\Vert=1$. Hence, $\{\bmu^k\}$ is bounded.

Now we are ready to end up the proof. Since $\{\bmu^k\}$ is bounded and $\varrho_k\rightarrow \infty$ as $k\rightarrow \infty$, we have $\bh(\bx^k)=\frac{\bmu^k}{\varrho_k}\rightarrow0$, i.e., $\bh(\bx^*)=0$. In addition, due to the boundedness of $\{\bmu^k\}$, there exists a convergent subsequence $\{\bmu^{k_r}\} $. Let it converge to $\bmu^*$. By restricting to the subsequence $\{\bmu^{k_r}\} $ and taking limits on both sides of \eqref{eq:key_ineq2}, we have 
$$\left(\bx-\bx^*\right)^T\left(\nabla f(\bx^*)+\nabla \bh(\bx^*)^T\bmu^*\right)\geq 0, \forall \bx\in\cX,$$
Together with the fact $\bh(\bx^*)=0$ and $\bx^*\in \cX$, we conclude that $\bx^*$ is a stationary point of problem $(P)$. This completes the proof.

\end{document}